\documentclass[journal,draftcls,onecolumn]{IEEEtran}
\usepackage{graphicx}
\usepackage{tabularx}
\usepackage{algorithm}
\usepackage{algpseudocode}
\usepackage{booktabs}
\usepackage{makecell}
\usepackage{array}
\usepackage[caption=false,font=normalsize,labelfont=sf,textfont=sf]{subfig}
\usepackage{textcomp}
\usepackage{stfloats}
\usepackage{url}
\usepackage{verbatim}
\usepackage{cite}
\usepackage{amsthm}
\usepackage{hyperref}
\newtheorem{theorem}{Theorem}
\newtheorem{lemma}{Lemma}
\newtheorem{proposition}{Proposition}
\newtheorem{corollary}{Corollary}
\theoremstyle{definition}
\newtheorem{definition}{Definition}
\newtheorem{example}{Example}
\newtheorem{fact}{Fact}
\newtheorem{remark}{Remark}

\usepackage[
n,
operators,
advantage,
sets,
adversary,
landau,
probability,
notions,	
logic,
ff,
mm,
primitives,
events,
complexity,
asymptotics,
keys]{cryptocode}

\newcommand{\splitatcommas}[1]{%
	\begingroup
	\begingroup\lccode`~=`, \lowercase{\endgroup
		\edef~{\mathchar\the\mathcode`, \penalty0 \noexpand\hspace{0pt plus 1em}}%
	}\mathcode`,="8000 #1%
	\endgroup
}

\begin{document}

\title{Analyzing Linear Layers in Related-Differential Cryptanalysis}

%\author{
%\IEEEauthorblockN{Yogesh Kumar,\IEEEmembership{} Akshay Ankush Yadav,~\IEEEmembership{} and Susanta Samanta,~\IEEEmembership{}}
%
%
%\thanks{This paper was produced by the IEEE Publication Technology Group. They are in Piscataway, NJ.}
%}

\author{
	Yogesh Kumar,
	Akshay Ankush Yadav,
	and Susanta Samanta%
	\thanks{
		Yogesh Kumar and Akshay Ankush Yadav are with the Scientific Analysis Group, DRDO, Metcalfe House Complex, Delhi-110054, India
		(e-mail: yogeshkumar.sag@gov.in,  akshay.ankush@gov.in).}
	\thanks{
		Susanta Samanta is with the Department of Electrical and Computer Engineering, University of Waterloo, Waterloo, ON N2L 3G1, Canada
		(e-mail: ssamanta@uwaterloo.ca).}
}

\maketitle
\thispagestyle{plain}
\pagestyle{plain}

\begin{abstract}
In AES-like ciphers, diffusion layers are commonly instantiated using MDS matrices, since their optimal branch number yields strong diffusion guarantees and underpins classical resistance arguments against differential and linear cryptanalysis. However, Daemen and Rijmen (2009) showed that linear layers may still exhibit related-differential structure beyond what the MDS criterion captures, and Bardeh and Rijmen (2022) demonstrated that this phenomenon can be exploited in attacks on reduced-round AES. In this work, we systematically investigate the conditions under which linear layers avoid or exhibit these differentials, identifying matrix classes for which such structure is unavoidable. We first prove that every non-MDS matrix admits a nontrivial pair of related differentials, showing that the MDS property is necessary for avoiding them. We then establish that every odd-order symmetric MDS matrix admits related differentials, which rules out broad families of Cauchy-based constructions. We also substantially strengthen the circulant case by proving that related differentials are unavoidable for every circulant matrix of order $n$ with $n \not\equiv \pm 2 \pmod{12}$. Finally, we revisit the characterization of $3 \times 3$ MDS matrices over $\mathbb{F}_{2^m}$ for the absence of related differentials, and derive an explicit necessary and sufficient criterion in terms of $15$ polynomial constraints.
\end{abstract}

\begin{IEEEkeywords}
Diffusion layer, MDS matrix,  circulant matrix, related differentials
\end{IEEEkeywords}

\section{Introduction}
SPN-based block ciphers represent the dominant paradigm in modern symmetric cryptography,
exemplified by the Advanced Encryption Standard (AES)~\cite{AES}. At the core of SPN design lies the wide-trail strategy~\cite{JDA_Thesis_1995}, which combines nonlinear substitution layers with a linear diffusion layer to achieve strong resistance against differential and linear cryptanalysis. In many designs, this diffusion layer is realized using an MDS matrix, since MDS matrices provide the maximum possible branch number and therefore maximize the spread of nonzero differences and linear masks across rounds. This property enables designers to derive provable lower bounds on the number of active S-boxes in differential and linear trails, forming the basis of security arguments for ciphers such as AES~\cite{AES}, SHARK~\cite{SHARK}, SQUARE~\cite{SQUARE}, and many related SPN constructions.

Motivated by the importance of MDS matrices in diffusion layer design, their construction has been studied extensively in the literature. Existing approaches can be classified broadly into two categories: non-recursive and recursive. In non-recursive constructions, the resulting matrices are directly MDS. In recursive constructions, one typically begins with a sparse matrix $A$ of order $n$, with entries chosen so that $A^n$ is MDS. Both categories can be further divided into algebraic, search-based, and hybrid methods. In algebraic constructions, the goal is to satisfy suitable algebraic conditions that guarantee the MDS property, without directly verifying the MDS property in each instance. Notable examples include Cauchy and Vandermonde based constructions, together with their generalizations~\cite{LACAN2003,kc2,kcz,Gupta2023direct,sdm} in the nonrecursive setting, and companion matrix based constructions~\cite{Augot2014ShortenedBCH,GuptaPV17_1,GuptaPV17_2} in the recursive setting. Search-based methods obtain MDS matrices by exploring constrained design spaces, often including exhaustive search, and circulant, Hadamard and related structures~\cite{Kesarwani_FSE2019,Sim2015LightweightInvolution,Liu2016GeneralizedCirculant,Li2016CirculantInvolutory} have been extensively studied in this direction. Hybrid methods~\cite{gmt,psa,Kumar_MDS2024,Kumar_4MDS} combine algebraic insight with search, typically by finding a representative MDS matrix and then using it to generate broader classes of MDS matrices.

While the above approaches primarily focus on achieving the MDS property and reducing implementation cost, it has also been observed that additional structural properties of the diffusion layer can influence cryptographic security. In particular, Daemen and Rijmen~\cite{daemen2009new} investigated linear transformations used in AES-like block ciphers in order to understand which properties of the linear layer influence the behavior of differentials and characteristics over super S-boxes. In particular, they introduced the notion of related differentials, a property of linear transformations that affects the distribution of fixed-key differential probabilities rather than the expected differential probability itself. They also showed that the AES MixColumns transformation admits related differentials. Later, Bardeh and Rijmen~\cite{GhaedBardeh_Rijmen_2022} leveraged the related differentials of the AES MixColumns transformation and combined them with the zero-difference property introduced in~\cite{RBH17}. This led to a new 7-round related-differential characteristic and, in particular, to a key-recovery attack on 7-round AES-128.

These advancements reveal that MDS matrices with the same diffusion properties need not have the same cryptographic behavior with respect to related differentials. In particular, Daemen and Rijmen~\cite{daemen2009new} showed that while $4 \times 4$ circulant MDS matrices admit related differentials, there also exist $4 \times 4$ MDS matrices without this property. More recently, Jha et al.~\cite{jha2025construction} studied this question for $4 \times 4$ Hadamard matrices and obtained a systematic construction of $4\times 4$ Hadamard MDS matrices resistant to related-differential cryptanalysis. The problem was subsequently investigated by Otal et al.~\cite{otal2025cryptographic} for $2 \times 2$ and $3 \times 3$ MDS matrices. They established that no $2 \times 2$ MDS matrix accommodates related differentials, affirmed that every $3 \times 3$ circulant MDS matrix does admit them, and proved a similar result for $3 \times 3$ involutory MDS matrices over finite fields of even characteristic. They also proposed a characterization aimed at identifying $3 \times 3$ MDS matrices over finite fields of even size that avoid related differentials.

Motivated by these results, in this paper we investigate the existence of related differentials for linear layers over finite fields. Our aim is to determine how the dimension and algebraic structure of the underlying linear layer influence the presence or absence of related differentials, and to obtain general insight into which classes of linear layers resist related-differential cryptanalysis. More specifically, our key contributions are summarized as follows:
\begin{itemize}
	\setlength{\itemsep}{1em}
	\item[$\bullet$] \textbf{Necessity of MDS property:} Although MDS matrices form a natural starting point because of their optimal diffusion and central role in block-cipher design, many lightweight block ciphers employ Near-MDS matrices to obtain a better trade-off between security and efficiency. Examples include PRIDE~\cite{PRIDE}, Midori~\cite{MIDORI}, MANTIS~\cite{SKINNY}, FIDES~\cite{Fides}, and PRINCE~\cite{PRINCE}. One well-known Near-MDS example is the $4 \times 4$ circulant matrix with first row $(0,1,1,1)$, which admits related differentials~\footnote{The related differentials it has are $([1, 0, 0, 0],~[0, 1, 1, 1])$ and $([0, 0, 1, 0],~[1, 1, 0, 1])$.}. This naturally leads to the question of whether other Near-MDS matrices, or more generally matrices with smaller branch number used in the literature, also exhibit related-differential behavior. We answer this question affirmatively by proving that the MDS property is necessary for avoiding related differentials: every non-MDS matrix admits related differentials. Therefore, the search for linear layers resistant to related-differential cryptanalysis must be confined to the MDS setting.
	
	\item[$\bullet$] \textbf{Vulnerability of symmetric matrices:} Since many cryptographically relevant matrix classes, including Hadamard-type constructions, are symmetric, it is natural to ask whether symmetry itself influences the existence of related differentials. While previous work~\cite{jha2025construction} showed that $4\times 4$ Hadamard MDS matrices do not inherently admit related differentials and derived algebraic conditions under which such matrices avoid them. This leaves open the broader question of whether symmetry as a structural property can prevent related differentials in general. We answer this question negatively by proving that all odd-order symmetric MDS matrices possess related differentials. Thus, despite the prominence of symmetric matrices in cryptographic design, symmetry in odd dimension constitutes an inherent vulnerability from the viewpoint of related-differential cryptanalysis.
	
	\item[$\bullet$] \textbf{Vulnerability of circulant matrices:} Circulant matrices are among the most prominent and widely studied structured matrices in the literature, particularly because of their role in the diffusion layers of AES~\cite{AES}. Since related differentials are known to exist for all $3\times 3$ circulant MDS matrices~\cite{otal2025cryptographic} and all $4\times 4$ circulant MDS matrices~\cite{daemen2009new}, it is natural to ask whether this phenomenon extends more generally to circulant matrices. We answer this question by proving that circulant MDS matrices of order $n$ admit related differentials whenever $n \not\equiv \pm 2 \pmod{12}$, thereby yielding a general characterization of related-differential vulnerability in this important matrix family.

	\item[$\bullet$] \textbf{Complete characterization for $3\times 3$ matrices:} 
	% Jha et al.~\cite{jha2025construction} obtained a complete characterization of $4\times 4$ Hadamard MDS matrices with respect to related differentials. 
	Otal et al.~\cite{otal2025cryptographic} studied the existence of related differentials for all $3\times 3$ MDS matrices and proposed conditions toward a characterization of this phenomenon. In this paper, we show that a complete characterization requires additional conditions. We therefore revisit the $3\times 3$ case and obtain a complete algebraic characterization in terms of 15 necessary and sufficient conditions, thereby enabling exhaustive verification and the construction of matrices resistant to related differentials in this dimension. Finally, based on the characterization, we compute the number of $3\times 3$ MDS matrices that have no related differentials. Table~\ref{tab:mds_enumeration} compares the total number of $3\times 3$ MDS matrices with the number of those having no related differentials.
	
\end{itemize}

\vspace{1em}
\noindent The remainder of this paper is organized as follows. Section~\ref{Sec:Definition} presents preliminaries on MDS matrices and related differentials. Section~\ref{Sec:RDvsMDS} develops the theory of related differentials for non-MDS and symmetric matrices. Section~\ref{Sec:RDvsCirculant} examines related differentials in circulant matrices. Section~\ref{Sec:RD_Charac} gives a complete characterization for $3\times 3$ matrices. Finally, Section~\ref{Sec:Conclusion} concludes the paper. 
% The structure of the paper is as follows. In Section~\ref{Sec:Definition}, we provide a brief discussion of the mathematical background and notations employed throughout the paper. Finally, Section~\ref{Sec:Conclusion} concludes the paper.

\section{Mathematical Preliminaries}~\label{Sec:Definition}
In this section, we collect the notation and basic definitions used throughout the paper. We work over finite fields of characteristic $2$ and recall the notions of branch number, MDS matrices, and related differentials that will be used in the subsequent sections.

Throughout this paper, $\mathbb{F}_{2^m}$ denotes the finite field with $2^m$ elements. We denote the multiplicative group of the finite field $\mathbb{F}_{2^m}$ by $\mathbb{F}_{2^m}^*$ and the set of vectors of length $n$ with entries from the finite field $\mathbb{F}_{2^m}$ is denoted by $\mathbb{F}_{2^m}^n$. The Hamming weight $\mathrm{wt}(x)$ of a vector $x\in \mathbb{F}_{2^m}^n$ is the number of nonzero components: $\mathrm{wt}(x)=|\set{i:x_i\neq 0}|$.

% MDS matrix finds its practical applications as a diffusion layer in cryptographic primitives. The concept of the MDS matrix comes from coding theory, specifically from the realm of maximum distance separable (MDS) codes. An $[n, k, d]$ code is MDS if it meets the singleton bound $d = n-k + 1$.

% \begin{theorem}~\cite[page 321]{FJ77} 
	%     An $[n, k, d]$ code $C$ with a generator matrix $G = [ I ~|~ M ]$, where $M$ is a $k \times ( n - k )$ matrix, is MDS if and only if every square submatrix (formed from any $i$ rows and any $i$ columns, for any $i = 1, 2,\ldots, min \{k, n - k \}$) of $M$ is nonsingular.
	% \end{theorem}

% \begin{definition}
	%     A matrix $M$ of order $n$ is said to be an MDS matrix if $[I~|~M]$ is a generator matrix of an $[2n,n]$ MDS code.
	% \end{definition}

% Another way to define an MDS matrix is as follows:

% \begin{fact}
	%     A square matrix $M$ is an MDS matrix if and only if every square submatrix $M$ is nonsingular. 
	% \end{fact}

% \noindent One of the elementary row operations on matrices is multiplying a row of a matrix by a nonzero scalar. MDS property remains invariant under such operations. Thus, we have the following result regarding MDS matrices.

% \subsection*{Differential Branch Number}

The security of SPN block ciphers against differential and linear cryptanalysis depends critically on the branch number~\cite{JDA_Thesis_1995} of the linear diffusion layer. For a linear transformation $M:\mathbb{F}_{2^m}^n \rightarrow \mathbb{F}_{2^m}^n$, the differential branch number is defined as
\[
B_d(M)=\min_{x\neq 0}\bigl(\mathrm{wt}(x)+\mathrm{wt}(Mx)\bigr),
\]
and the linear branch number is defined as
\[
B_l(M)=\min_{x\neq 0}\bigl(\mathrm{wt}(x)+\mathrm{wt}(M^Tx)\bigr).
\]
A linear transformation $M$ is called \emph{Maximum Distance Separable (MDS)} if its branch number attains the theoretical maximum, namely 
\[B_d(M)=B_l(M)=n+1. \]

MDS matrix finds its practical applications as a diffusion layer in cryptographic primitives. The concept of the MDS matrix comes from coding theory, specifically from the realm of maximum distance separable (MDS) codes. An $[n, k, d]$ code is MDS if it meets the singleton bound $d = n-k + 1$.

\begin{theorem}~\cite[page 321]{FJ77} 
	An $[n, k, d]$ code $C$ with a generator matrix $G = [ I ~|~ M ]$, where $M$ is a $k \times ( n - k )$ matrix, is MDS if and only if every square submatrix (formed from any $i$ rows and any $i$ columns, for any $i = 1, 2,\ldots, min \{k, n - k \}$) of $M$ is nonsingular.
\end{theorem}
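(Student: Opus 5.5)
The plan is to prove the two directions separately, working directly from the definition that $C$ is MDS when its minimum distance equals $n+1-k$, where $n$ is the code length. Throughout, codewords are $uG = (u, uM)$ with $u \in \mathbb{F}_{2^m}^k$.

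\textbf{Sufficiency.} Assume every square submatrix of $M$ is nonsingular, and take a nonzero message $u$ with $\mathrm{wt}(u) = w \ge 1$. The weight of the codeword $(u, uM)$ is $w + \mathrm{wt}(uM)$. Suppose, for contradiction, that $\mathrm{wt}(uM) \le n-k-w$. Then $uM$ vanishes on at least $w$ of its $n-k$ positions.

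\emph{Case $w \le \min\{k, n-k\}$.} Choose $w$ of those zero columns, and let $S$ be the $w$ rows given by the support of $u$. Then $u_S M_{S,T} = 0$ with $u_S \ne 0$, so the $w \times w$ submatrix $M_{S,T}$ is singular. This is a contradiction.

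\emph{Case $w > n-k$.} Then $\mathrm{wt}(u) + \mathrm{wt}(uM) \ge w \ge n-k+1$ trivially.

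\emph{Case $n-k \ge w > k$.} This cannot happen, since $w \le k$ always.

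In every case each nonzero codeword has weight at least $n-k+1$. Combined with the Singleton bound, this gives $d = n-k+1$.

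\textbf{Necessity.} Assume $C$ is MDS, and suppose some $i \times i$ submatrix $M_{S,T}$ is singular, where $i \le \min\{k, n-k\}$. Pick a nonzero $v \in \mathbb{F}_{2^m}^{i}$ with $v M_{S,T} = 0$. Let $u$ be $v$ placed on the rows $S$ and zero elsewhere, so $\mathrm{wt}(u) \le i$. Then $uM$ vanishes on the columns $T$, so $\mathrm{wt}(uM) \le n-k-i$. The codeword $(u, uM)$ is nonzero because $u \ne 0$, and its weight is at most $n-k$. This contradicts $d = n-k+1$.

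\textbf{Main obstacle.} The only delicate point is the indexing in the sufficiency direction. The zero positions of $uM$ must be counted among its $n-k$ coordinates, and the chosen square submatrix must have size exactly $w = \mathrm{wt}(u)$, which is what forces the case split above. The necessity direction is the easier one, since $\mathrm{wt}(u) \le i$ automatically.

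For the paper's setting, take $k = n/2$ with a square $M$ of order $n/2$. The statement then says that $M$ is MDS in the branch-number sense exactly when all of its minors are nonzero. The branch-number statement follows because, up to relabeling, it is the minimum-distance statement for $[I \mid M]$, and similarly for $M^T$.
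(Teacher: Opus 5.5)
The paper gives no proof of this statement. It is quoted from MacWilliams--Sloane and used only to justify the subsequent Fact that a square matrix is MDS iff all its minors are nonzero.

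Your proof is correct and complete, and it takes a different route from the usual textbook one. The textbook argument first shows that $C$ is MDS iff every $k$ columns of a generator matrix are linearly independent, i.e.\ every $k$ coordinates form an information set. It then observes that any $k$ columns of $[I\,|\,M]$, say $i$ columns of $M$ indexed by $T$ and $k-i$ columns of $I$ indexed by a row set $R$, form a matrix with determinant $\pm\det M_{R^c,T}$. This gives a bijection between $k$-subsets of columns and square submatrices of $M$, so both directions follow from a single determinant identity.

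You instead argue directly with codeword weights $\mathrm{wt}(u)+\mathrm{wt}(uM)$ and the support of $u$. This needs nothing beyond the Singleton bound and uses the same language as the paper's branch-number definition. The textbook route is more structural, since it exposes the information-set and duality picture behind MDS codes.

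Two small points:
\begin{itemize}
\item Your case split in the sufficiency direction collapses to the first case. Since $w\le k$ always, and the contradiction hypothesis $\mathrm{wt}(uM)\le n-k-w$ already forces $w\le n-k$, the other two cases are vacuous.
\item Your closing remark can be stated more precisely. For a square $M$ of order $k$ (code length $2k$), the code generated by $[I\,|\,M]$ has minimum distance $B_l(M)$. The code generated by $[I\,|\,M^T]$ has minimum distance $B_d(M)$ under the paper's convention $x\mapsto Mx$. Since $M$ and $M^T$ have the same minors, both branch numbers equal $k+1$ exactly when all minors of $M$ are nonzero.
\end{itemize}
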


% \begin{definition}
	%     A matrix $M$ of order $n$ is said to be an MDS matrix if $[I~|~M]$ is a generator matrix of an $[2n,n]$ MDS code.
	% \end{definition}

The following fact is another way to characterize an MDS matrix.

\begin{fact}
	A square matrix $M$ is an MDS matrix if and only if every square submatrix $M$ is nonsingular. 
\end{fact}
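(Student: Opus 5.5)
The plan is to derive the Fact directly from the branch-number definition of MDS given above, with the cited coding-theory theorem as an alternative route. Let $M$ be an $n\times n$ matrix over $\mathbb{F}_{2^m}$.

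\emph{Forward direction.} Suppose $B_d(M)=n+1$, and take any $k\times k$ submatrix $M[R,C]$ with rows $R$ and columns $C$, where $|R|=|C|=k$. Assume it is singular. Then some nonzero $y\in\mathbb{F}_{2^m}^k$ satisfies $M[R,C]y=0$. Extend $y$ by zeros outside $C$ to get $x\in\mathbb{F}_{2^m}^n$. This gives $\mathrm{wt}(x)\le k$ and $(Mx)_i=0$ for every $i\in R$, so $\mathrm{wt}(Mx)\le n-k$. Hence $\mathrm{wt}(x)+\mathrm{wt}(Mx)\le n$, contradicting $B_d(M)=n+1$. So every square submatrix is nonsingular. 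Note that only $B_d$ is needed here.

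\emph{Converse.} Suppose every square submatrix is nonsingular; this includes $M$ itself, so $M$ is invertible. Take $x\neq 0$ with $\mathrm{wt}(x)=k$ and support $C$. If $k=n$, the bound holds because $Mx\neq0$. Otherwise, assume $\mathrm{wt}(Mx)\le n-k$. Then $Mx$ vanishes on some row set $R$ with $|R|\ge k$; choose $R'\subseteq R$ with $|R'|=k$. This gives $M[R',C]x_C=0$ with $x_C\neq0$, contradicting nonsingularity. Therefore $\mathrm{wt}(x)+\mathrm{wt}(Mx)\ge n+1$, and equality is attained, for example by a weight-one $x$. So $B_d(M)=n+1$.

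For the linear branch number, every square submatrix of $M^T$ is the transpose of a square submatrix of $M$, hence also nonsingular. The same argument applied to $M^T$ gives $B_l(M)=n+1$. This also shows that $B_d=n+1$ and $B_l=n+1$ are equivalent, which matters because the paper's definition requires both.

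The only delicate points are bookkeeping: the weight-counting step, namely choosing exactly $k$ zero rows when $Mx$ has at least $k$ zeros, and the observation that transposition preserves the submatrix condition. Alternatively, one can note that $B_d(M)$ equals the minimum distance of the $[2n,n]$ code with generator $[I\mid M^T]$ (the codewords are $(x,Mx)$), and then apply the cited theorem from \cite{FJ77} with $k=n$.
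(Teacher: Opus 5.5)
Your proof is correct, and it takes a different route from the paper. The paper gives no argument for this Fact. It places it directly after the cited MacWilliams--Sloane theorem, so the intended justification is to apply that theorem with $k=n$ to a $[2n,n]$ code whose codewords are input/output pairs of $M$, which is essentially your closing ``alternative'' sentence.

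Your main argument instead proves the criterion from scratch by weight counting. A kernel vector of a singular $k\times k$ submatrix, padded with zeros, gives $\mathrm{wt}(x)+\mathrm{wt}(Mx)\le n$. Conversely, if $\mathrm{wt}(x)=k$ and $Mx$ has at least $k$ zero coordinates, then any $k$ of those rows together with the support of $x$ form a singular $k\times k$ submatrix. In effect this is a self-contained proof of the cited theorem in the square case. It also supplies two things the paper leaves implicit. First, it bridges the branch-number definition of MDS that the paper actually uses and the code-theoretic notion in the cited theorem. Second, because the submatrix condition is invariant under transposition, $B_d(M)=n+1$ and $B_l(M)=n+1$ are each equivalent to it, so requiring both in the paper's definition is redundant.

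Your remark that the code with codewords $(x,Mx)$ is generated by $[I\mid M^T]$ rather than $[I\mid M]$ is also accurate. With the standard row-vector convention, the code generated by $[I\mid M]$ has minimum distance $B_l(M)$, and transposition-invariance is exactly what reconciles the two. The citation route is shorter and covers the rectangular $k\times(n-k)$ case, but it leaves these translations to the reader.
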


We now recall the notions of related differences and related differentials introduced in~\cite{daemen2009new}.

% \subsection*{Related Differences and Related Differentials}

% In~\cite{daemen2009new}, the authors introduced the notions of related differences and related differentials. Let $\mathbb{F}_{2^m}$ be the underlying finite field over which the S-box of the block cipher is defined.  

Let $M: \mathbb{F}_{2^m}^{n}\rightarrow \mathbb{F}_{2^m}^{n}$ be a linear map. For an input difference $u\in \mathbb{F}_{2^m}^n$, the corresponding output difference is $v=Mu$. The pair $(u,v)$ is called a \emph{differential} of the linear map $M$.

% Given an input difference vector $u$ and a MixColumns matrix $M$, let
% \[
% v = Mu
% \]
% denote the corresponding output difference. The input/output difference pair $(u, v)$ is called a differential.  

We restate the definitions of related differences and related differentials from~\cite{daemen2009new}.

\begin{definition}
	Two vectors 
	\[
	u= [u_1 , u_2 , \dots , u_n]
	\quad \text{and} \quad 
	v= [v_1 , v_2 , \dots , v_n]
	\]
	are called \emph{related differences} if and only if
	\[
	u_i \cdot v_i \cdot (u_i \oplus v_i) = 0,
	\quad ~\forall ~i \in \{1,2,\dots,n\}.
	\]
\end{definition}

Equivalently, for each coordinate $i$, at least one of the three values $u_i$, $v_i$, and $u_i+v_i$ is zero.

\begin{definition}
	Two differentials $(u, Mu)$ and $(v, Mv)$ for a linear map are called \emph{related differentials} if and only if $(u, v)$ and $(Mu, Mv)$ are related differences.
\end{definition}

\begin{example}
	Consider the matrix
	\[M= \begin{bmatrix}
		0 & 1 & 1 & 1\\
		1 & 0 & 1 & 1\\
		1 & 1 & 0 & 1\\
		1 & 1 & 1 & 0
	\end{bmatrix} \]
	over the finite field $\mathbb{F}_{2^m}$. Let
	\[
	u=[1,0,0,0], \qquad v=[0,0,1,0].
	\]
	Then
	\[
	Mu=[0,1,1,1], \qquad Mv=[1,1,0,1].
	\]
	Hence, the differentials $(u,Mu)$ and $(v,Mv)$ are related differentials for the linear map defined by $M$.
\end{example}

\noindent In the next section, we present our main structural results on the classes of matrices that admit related differentials.

\section{Related Differentials and MDS Matrices}\label{Sec:RDvsMDS}
% Having established the mathematical foundations for analyzing related differentials, we now present our main structural results characterizing which classes of matrices are vulnerable to related-differential attacks. 
This section develops two fundamental theorems: first, that non-MDS matrices always admit related differentials; second, that odd-order symmetric MDS matrices necessarily possess related differentials. These results provide both necessary conditions for resistance and identify specific vulnerable construction classes.

We first begin by showing that any non-MDS matrix contains related differentials. Let $M=[a_{ij}]_{n\times n}$ be a square matrix of order $n$. Any submatrix of $M$ of order $1\leq r \leq n$ is constructed by selecting a subset of its $r$ rows and $r$ columns. It can be denoted as $M[i_1,i_2\cdots,i_r; j_1,j_2\cdots, j_r]$, where $i_t$ represents rows selected and $j_t$ represents columns selected of matrix $M$ for $1\leq t \leq r$. To further simplify, let $I=\{i_1,i_2 \cdots, i_r\}$ and $J=\{j_1,j_2,\cdots ,j_r\}$ be subsets of rows and columns of matrix $M$, respectively. Then 
$M[I;J]$ is a submatrix of $M$ of order $r$. Furthermore, the complementary submatrix of $M[I;J]$ is obtained by removing $I$ rows and $J$ columns of $M$ and it is denoted by $M[I^c;J^c]$, where $I^c= \{1,2,\cdots, n\}\backslash I$ and $J^c= \{1,2,\cdots, n\}\backslash J$.
The following theorem shows that a non-MDS matrix will always has a related differential pair. This means MDS property is necessary for avoiding related differentials. This also emphasizes the importance of MDS matrices in cryptography.

\begin{theorem}\label{thm:non-mds}
	Let $M=[a_{i,j}]_{n\times n}$ be a matrix of order $n$. If $M$ is not an MDS matrix, then $M$ has at least one nontrivial pair of related differentials.
\end{theorem}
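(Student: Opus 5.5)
The plan is to extract from the failure of the MDS property a nonzero difference $u$ with small total support, and then to find a second difference $v$ by a linear-algebra count. The key observation is coordinatewise. The condition $u_i v_i(u_i\oplus v_i)=0$ holds automatically when $u_i=0$. When $u_i\neq 0$, it says exactly that $v_i\in\{0,u_i\}$. The same reading applies to $(Mu,Mv)$. So I only need $v\neq 0$, $v\neq u$, with $v_i\in\{0,u_i\}$ on $\mathrm{supp}(u)$ and $(Mv)_i\in\{0,(Mu)_i\}$ on $\mathrm{supp}(Mu)$.

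\textbf{Step 1 (a low-weight differential).} Since $M$ is not MDS, the Fact gives a singular square submatrix $M[I;J]$. Let $x\neq 0$ be supported on $J$ with $M[I;J]\,x_J=0$, and set $u=x$. Then $(Mu)_I=0$. Hence, with $S=\mathrm{supp}(u)$ and $T=\mathrm{supp}(Mu)$, we have $|S|+|T|\le |J|+(n-|I|)=n$.

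\textbf{Step 2 (degenerate case $T=\emptyset$).} If $Mu=0$, the condition on $(Mu,Mv)$ is vacuous. If $|S|\ge 2$, take $v$ equal to $u$ on a single coordinate of $S$ and zero elsewhere. If $|S|=1$, take $v=e_k$ for some $k\notin S$ (assuming $n\ge 2$). In both cases $v\neq 0,u$.

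\textbf{Step 3 (main case $T\neq\emptyset$).} Here $|S|\le n-|T|<n$, so $S^c\neq\emptyset$. I look for $v$ supported on $S^c$, which makes $(u,v)$ related automatically. Consider the linear map
\[
\phi:\mathbb{F}_{2^m}^{S^c}\to\mathbb{F}_{2^m}^{T},\qquad w\mapsto (Mw)_T .
\]
Its domain has dimension $n-|S|\ge |T|$, which is the dimension of its codomain. Two subcases arise.
\begin{itemize}
\item If $\phi$ is not injective (this is automatic when $|S|+|T|<n$), take $v\neq 0$ in $\ker\phi$. Then $(Mv)_i=0$ for all $i\in T$, so $(Mu,Mv)$ are related.
\item If $\phi$ is injective, the equality $|S|+|T|=n$ makes $\phi$ bijective. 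Choose $v=\phi^{-1}\bigl((Mu)_T\bigr)$. Then $(Mv)_i=(Mu)_i$ for all $i\in T$, which is again allowed. Moreover $v\neq 0$ because $(Mu)_T\neq 0$.
\end{itemize}
In either subcase $\mathrm{supp}(v)\subseteq S^c$ while $u\neq 0$ is supported on $S$, so $v\neq u$ and the pair is nontrivial.

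\textbf{Expected obstacle.} The main subtlety is the boundary case $|S|+|T|=n$, where a pure kernel argument fails. Allowing $(Mv)_i=(Mu)_i$ on $T$ is what resolves it. I also need to confirm that ``nontrivial'' in the paper means only $u,v\neq 0$ and $u\neq v$, and check the trivial order $n=1$, where non-MDS forces $M=0$.
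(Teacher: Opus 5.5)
Your proof is correct and follows essentially the paper's route: $u$ comes from the kernel of a singular submatrix $M[I;J]$, and $v$ is supported off $\mathrm{supp}(u)$ so that $Mv$ either vanishes on $\mathrm{supp}(Mu)$ or agrees with $Mu$ there; your injectivity split on $\phi$ plays the role of the paper's split on whether $\det M[I^c;J^c]=0$, and you solve the inhomogeneous system directly where the paper adjoins $\alpha$ as an extra unknown. Working with the actual supports $S,T$ and handling $Mu=0$ separately in Step 2 genuinely improves on the paper, whose Case 2 silently yields $v=0$ when $Mu=0$; this happens, for instance, for a singular $M$ whose proper square submatrices are all nonsingular, which forces $I=J=\{1,\dots,n\}$. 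Your remark that $n=1$ must be excluded is also right.
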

\begin{proof}
	Suppose that the matrix $M$ is not an MDS matrix. Then there exists a $r\times r$ submatrix of $M$ whose determinant is zero. Let $M[I;J]$ be that submatrix, where $I=\{i_1,\cdots, i_r\}$ be rows and $J=\{j_1,\cdots ,j_r\}$ be columns of $M$ selected. Thus, we have a nonzero vector $s$ of length $r$, say, $s=[s_1, \cdots, s_r]$ such that 
	\[M[I;J]s^t= 0^t.\] 
	
	And also observe that, for any scalar $\alpha$, $\alpha s^t$ will also map to zero under $M[I;J]$. Now, define vector $u= [u_1, \cdots, u_n]$ of length $n$ such that 
	\[
	u_j=\left\{
	\begin{aligned}
		\alpha s_t \  \text{if} \ j=j_t,\\
		0 \ \text{elsewhere}
	\end{aligned}
	\right.
	.\]
	
	It is observed that the vector $Mu^t= u'=[u'_1,u'_2,\cdots, u'_n]$ will be given as \[u'_i= \sum_{t=1}^r\alpha a_{ij_t}s_{t} \text{ for all } 1\leq i \leq n.\] 
	
	Thus, we have for $i \in I$, $u'_i=0$. Now, we have the following two cases:\\
	
	\noindent\textbf{Case 1:} $\mathbf{\det(M[I^c;J^c])=0}$ \\
	
	If $\det(M[I^c;J^c])=0$, then, we have a nonzero vector $w=[w_1,\cdots,w_{n-r}]$ such that \[M[I^c;J^c]w^t=0^t.\] 
	
	Let, $J^c=\{k_1,k_2, \cdots, k_{n-r} \}$ and define a vector $v=[v_1, \cdots, v_n]$ such that 
	\[v_j=\left\{
	\begin{aligned}
		w_{t} \  \text{if} \ j= k_t,\\
		0 \ \text{if} \ j\neq k_t 
	\end{aligned}
	\right.
	\text{ for } 1\leq t \leq n-r.\] \\
	
	Thus, we will have \[Mv^t=v'=[v'_1,v'_2, \cdots,v'_n],\] 
	where $v'_i= \sum_{1\leq t \leq n-r} a_{ik_t}w_{t}$. \\
	
	Observe that, if $i \notin I$, $v'_i=0$. Thus, $(u',v')$ forms related differences. Also $(u,v)$ forms related differences as $u_j=0$ if $j \not\in J$ and $v_j=0$ if $j\in J$. Therefore, this choice of $(u, Mu)$ and $(v,Mv)$ will form a related differential.\\
	
	\noindent\textbf{Case 2:} $\mathbf{\det(M[I^c;J^c]) \neq0}$ \\ 
	
	Now, consider the case when, $det(M[I^c;J^c])\neq0$ then, we know that 
	\[M[I^c;J^c]X^t = 0^t\] will have only a trivial solution. \\
	
	Now, to construct a nontrivial related differential pair, let $X=[x_1, \cdots, x_n]$ be a general vector (variables). Define a vector $v=[v_1, \cdots, v_n]$ such that 
	\[v_j=\left\{
	\begin{aligned}
		x_j \  \text{if} \ j \in J^c,\\
		0 \ \text{if} \ j \in J \\
	\end{aligned}
	\right.
	.\]
	Then, we have \[Mv^t=v'=[v'_1,v'_2, \cdots,v'_n],\] 
	where $v'_i= \sum_{1\leq j \leq n, j \in J^c} a_{ij}x_{j}$.\\
	
	As we already know that $u$ and $v$ form related differences.
	Thus, $(u, u')$ and $(v,v')$ will form related differential pair if $u'$ and $v'$ form related differences. \\
	
	It is observed that, if $i \in I$, $u'_i=0$. Let us assume that $u'_i\oplus v'_i=0$ for $i \notin I$. Then 
	\[v'_i= u'_i \implies \sum_{1\leq j \leq n, j \in J^c} a_{ij}x_{j}= \sum_{t=1}^r\alpha a_{i,j_t}s_{t}.\]
	Thus we have the following system of $(n-r)$ homogeneous equations:
	
	\[ \sum_{1\leq j \leq n, j \in J^c} a_{i,j}x_{j}+ \left(\sum_{t=1}^r a_{ij_t}s_{t}\right)\alpha=0.\]
	
	The above system has $n-r+1$ variables, namely $x_j$ and $\alpha$. Thus, this system of equations will definitely have a nontrivial solution. Since $\det(M[I^c;J^c] \neq 0$, $\alpha$ will not be zero. The solution will provide a nontrivial related differential $(u, u')$ and $(v,v')$.
	
\end{proof}

Thus, we have the following corollary.
\begin{corollary}\label{Coro:nonMDShasRD}
	If a square matrix $M$ does not have related differentials, then $M$ is an MDS matrix.
\end{corollary}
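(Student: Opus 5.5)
The corollary is the contrapositive of Theorem~\ref{thm:non-mds}, so the plan is a one-line logical argument. Assume that $M$ has no nontrivial pair of related differentials, and suppose for contradiction that $M$ is not MDS. Theorem~\ref{thm:non-mds} then gives a nontrivial pair of related differentials $(u,Mu)$ and $(v,Mv)$ for $M$. This contradicts the assumption, so $M$ must be MDS.

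The only point that needs care is making sure "does not have related differentials" means the same thing as the negation of the theorem's conclusion. Trivial pairs must be excluded from the notion, since otherwise the corollary would be vacuous. For example, $u=0$ gives $(0,Mv)$ and $(0,0)$ as related differences for every $M$, and $u=v$ gives $u_i\oplus v_i=0$ in every coordinate. So I would state explicitly that "having related differentials" means having a nontrivial pair, with $u\neq 0$, $v\neq 0$ and $u\neq v$, as in the construction of Theorem~\ref{thm:non-mds}.

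For this reading to match, I would check that the theorem's proof indeed produces such a pair:
\begin{itemize}
\item In Case~1, $u$ and $v$ are nonzero and have disjoint supports $J$ and $J^c$, so $u\neq v$.
\item In Case~2, $u\neq 0$ because $\alpha\neq 0$. The vector $v$ could be zero only if the solution of the homogeneous system had all $x_j=0$. Then the system would force $\alpha\sum_t a_{i j_t}s_t=0$ for $i\notin I$, so $Mu=0$. For an invertible $M$ this is impossible.
\end{itemize}

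The one genuine subtlety is Case~2 with a singular $M$, where $v=0$ might occur. That case, however, is handled directly by any nonzero kernel vector, in the same way as Case~1 with $I=J=\{1,\dots,n\}$. With this check, nontriviality is guaranteed and the contrapositive yields the corollary. I do not expect any real obstacle beyond this bookkeeping.
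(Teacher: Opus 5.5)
Your main argument is the paper's own: the corollary is stated right after Theorem~\ref{thm:non-mds} as its contrapositive, with no further proof. Your nontriviality check goes beyond what the paper verifies, and its main point is correct. In Case~2 the construction gives $v=0$ exactly when $Mu=0$, so it can degenerate for singular $M$. It always degenerates when the only singular square submatrix is $M$ itself, as for the all-ones $2\times 2$ matrix, because then $J^c=\emptyset$.

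Your patch for this case is misdescribed, though. Running Case~1 with $I=J=\{1,\dots,n\}$ again gives $J^c=\emptyset$, so $v=0$ and the pair is trivial. The fix that works is as follows. Take $w\neq 0$ with $Mw=0$. Then $(Mw,Mv)=(0,Mv)$ is a related pair for every $v$, so it suffices to pick $v\notin\{0,w\}$ with $(w,v)$ related:
\begin{itemize}
\item if $\mathrm{wt}(w)\ge 2$, let $v$ be $w$ restricted to a nonempty proper subset of its support;
\item if $\mathrm{wt}(w)=1$, let $v$ be a unit vector outside the support of $w$.
\end{itemize}

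This needs $n\ge 2$. For $n=1$, the condition $uv(u\oplus v)=0$ forces $u=0$, $v=0$ or $u=v$. So the non-MDS matrix $[0]$ has no nontrivial related differentials, and both the corollary and Theorem~\ref{thm:non-mds} tacitly assume $n\ge 2$.
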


% It is important to note that the MDS property is not sufficient to avoid related differentials, as we will demonstrate in subsequent results. 
\begin{remark}
	Corollary~\ref{Coro:nonMDShasRD} establishes that non-MDS matrices cannot avoid related differentials, narrowing the search space for resistant matrices to the class of MDS matrices.
\end{remark}

As we already know, the MDS property is preserved under the inverse operation. The above corollary tells us that the set of MDS matrices contains the set of matrices that do not have related differentials. In the following theorem, we show that the property of not having related differentials is also preserved under the inverse operation. 
\begin{theorem}
	Let $M$ be an invertible square matrix of order $n$. Then, $M$ does not have any related differential pair if and only if $M^{-1}$ does not have any related differential pair.
\end{theorem}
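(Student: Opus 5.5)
The plan is to prove the contrapositive in both directions by a direct transfer of related differential pairs between $M$ and $M^{-1}$. The key observation is that the definition of a related differential pair is symmetric in the roles of input and output differences. Given a pair of differentials $(u,Mu)$ and $(v,Mv)$ for $M$, set $x=Mu$ and $y=Mv$. Then $M^{-1}x=u$ and $M^{-1}y=v$, so $(x,M^{-1}x)=(Mu,u)$ and $(y,M^{-1}y)=(Mv,v)$ are differentials of the linear map $M^{-1}$.

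First I would verify the transfer. By definition, $(u,Mu)$ and $(v,Mv)$ are related differentials for $M$ exactly when both $(u,v)$ and $(Mu,Mv)$ are related differences. For the pair $(x,M^{-1}x)$, $(y,M^{-1}y)$ to be related differentials of $M^{-1}$, we need $(x,y)=(Mu,Mv)$ and $(M^{-1}x,M^{-1}y)=(u,v)$ to be related differences. This is the same pair of conditions, just listed in the opposite order. So the two conditions coincide coordinatewise: each requires $u_iv_i(u_i\oplus v_i)=0$ and $(Mu)_i(Mv)_i((Mu)_i\oplus(Mv)_i)=0$ for all $i$.

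Second, I would check that nontriviality is preserved, since the statement concerns nontrivial pairs, as in Theorem~\ref{thm:non-mds}. Because $M$ is invertible, $u\neq 0$ if and only if $x=Mu\neq 0$, and $u\neq v$ if and only if $x\neq y$. So whatever nontriviality convention is used (nonzero differences, distinct differentials, or $v$ not a scalar multiple of $u$) carries over, since $M$ is a linear bijection that preserves scalar multiples.

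Finally, having a related differential pair for $M$ implies having one for $M^{-1}$. Applying the same argument to $M^{-1}$, whose inverse is $M$, gives the converse, and the equivalence of the negations follows. The only step needing care is the nontriviality convention, which is not fixed explicitly in the paper. I will state it as "nonzero and distinct differences" and note that the argument is insensitive to the choice, since $M$ is a linear bijection.
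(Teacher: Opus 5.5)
Your proposal is correct and is essentially the paper's argument: a related differential pair for one map becomes one for the other by swapping the roles of input and output differences, using that $M$ is a bijection, and the converse follows by symmetry. Your extra check that nontriviality (e.g.\ $u$, $v$, $u\oplus v$ all nonzero) is preserved by the invertible linear map is a small addition that the paper leaves implicit.
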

\begin{proof}
	Let us assume that $M^{-1}$ have related differential pair, say $\splitatcommas{(u,M^{-1}u), (v, M^{-1}v)}$. Since $M$ is invertible, it is a bijective map. Thus, we have vectors $s$ and $t$ such that $Ms=u$ and $Mt=v$. Therefore $(M^{-1}u,u)=(s,Ms)$ and $(M^{-1}v,v)=(t,Mt)$ form related differential pair for the map $M$.
	In the same manner, one can show that if $M$ has related differential pair, then $M^{-1}$ will also has a related differential pair.
	
\end{proof}

It is important to note that the MDS property is not sufficient to avoid related differentials, as we will demonstrate in the subsequent results.

\begin{theorem}\label{thm:symmetric}
	Let $M$ be a symmetric MDS matrix of order $n$. If $n$ is odd, then $M$ admits a related differential pair. 
\end{theorem}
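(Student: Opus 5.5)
The plan is to turn the search for a related differential pair into a single homogeneous linear system. Symmetry of $M$ will make that system alternating (symmetric with zero diagonal), and in characteristic $2$ an alternating matrix of odd order is singular. The MDS property is then used only to rule out degenerate solutions.

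\textbf{Step 1 (encoding the relatedness conditions).} Partition $\{1,\dots,n\}=A\sqcup B\sqcup C$. Look for $u,v$ with $v_k=0$ on $A$, $u_k=0$ on $B$, and $u_k=v_k$ on $C$; then $(u,v)$ are related differences automatically. The free parameters are $x=u|_A$, $y=v|_B$ and $z=u|_C=v|_C$, giving $n$ unknowns. For the outputs, impose the transposed pattern:
\begin{itemize}
\item $(Mu)_k=0$ for $k\in B$;
\item $(Mv)_k=0$ for $k\in A$;
\item $(Mu)_k=(Mv)_k$, i.e.\ $(M(u+v))_k=0$, for $k\in C$.
\end{itemize}
Since $u+v$ is supported on $A\cup B$ with entries $x,y$ (the $z$-part cancels), these are $n$ equations:
\[
M_{B,A}x+M_{B,C}z=0,\qquad M_{A,B}y+M_{A,C}z=0,\qquad M_{C,A}x+M_{C,B}y=0 .
\]
Ordering unknowns as $(x,y,z)$ and equations as $(A,B,C)$, the coefficient matrix is $N=\begin{bmatrix}0&M_{A,B}&M_{A,C}\\ M_{B,A}&0&M_{B,C}\\ M_{C,A}&M_{C,B}&0\end{bmatrix}$, which is $M$ with its three diagonal blocks set to zero.

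\textbf{Step 2 (singularity).} Because $M=M^T$, we have $M_{B,A}=M_{A,B}^T$ and so on, so $N$ is symmetric with zero diagonal. Over characteristic $2$ such a matrix is alternating, since $w^TNw=\sum_i N_{ii}w_i^2=0$. Hence its determinant is the square of a Pfaffian when the order is even, and is $0$ when the order is odd. For a self-contained argument one can instead use $\det N=\det N^T=\det(-N)=(-1)^n\det N$ in characteristic $0$ and reduce the integer polynomial identity mod $2$. Either way, $n$ odd gives a nonzero kernel vector $(x,y,z)$.

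\textbf{Step 3 (nontriviality; the main obstacle).} We must ensure $u\neq 0$, $v\neq 0$ and $u\neq v$, so that the two differentials are nonzero and distinct. Choose the partition with all three parts nonempty and each part of size at most the sum of the other two; for instance sizes as equal as possible, which is possible for $n\ge 3$.
\begin{itemize}
\item If $u=0$, then $x=z=0$, so $M_{A,B}y=0$ and $M_{C,B}y=0$. Thus the $(|A|+|C|)\times|B|$ submatrix $M_{A\cup C,B}$ kills $y$. Since $|A|+|C|\ge|B|$, it contains a $|B|\times|B|$ submatrix, which is nonsingular by the MDS Fact, so $y=0$. This contradicts that the kernel vector is nonzero.
\item The case $v=0$ is symmetric, using $M_{B\cup C,A}$.
\item The case $u=v$ gives $x=y=0$ and $M_{A\cup B,C}z=0$, which forces $z=0$ since $|A|+|B|\ge|C|$.
\end{itemize}
Finally, $M$ is invertible (MDS), so $Mu$ and $Mv$ are also nonzero and distinct. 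Hence $(u,Mu)$ and $(v,Mv)$ form a nontrivial related differential pair.

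The only delicate points are checking that the block pattern really produces a symmetric zero-diagonal system, and choosing part sizes so that the MDS minors exclude degeneracy. I expect the latter to be the main place needing care.
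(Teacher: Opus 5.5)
Your proposal is correct and is essentially the paper's proof. The paper uses the particular partition $A=\{1,\dots,\frac{n-1}{2}\}$, $C=\{\frac{n+1}{2}\}$, $B=\{\frac{n+3}{2},\dots,n\}$ with the same output pattern, obtains the same matrix $M'$ (your $N$, i.e.\ $M$ with its diagonal blocks zeroed), and concludes $\det M'=0$ because $M'$ is an odd-order ``skew-symmetric'' matrix. Your version also makes explicit two points the paper leaves implicit: in characteristic $2$ it is the zero diagonal (alternating, not merely skew-symmetric) that forces singularity, and the MDS property rules out the degenerate solutions $u=0$, $v=0$, $u=v$; the paper's partition also satisfies your size condition, so your check covers it too.
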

\begin{proof}
	Let $M=[a_{ij}]_{n\times n}$ be a symmetric matrix. Then, we have 
	\[a_{ij}=a_{ji} ~ \forall ~ 1\leq i,j \leq n. \] 
	
	If $n$ is odd, then $n+1$ is an even number. Let us consider \[u=[u_1,u_2,\cdots, u_{\frac{n+1}{2}},0,0,\cdots,0] \text{ and }  v=[0,0,\cdots,0, u_{\frac{n+1}{2}},u_{\frac{n+1}{2}+1}, \cdots,u_n].\]
	
	Observe that $u$ and $v$ form a related difference pair. Now, let $Mu=w$ and $Mv=s$, then it is enough to find $u$ and $v$ such that $w$ and $s$ will be a related difference pair. \\
	
	Now, divide the matrix $M$ into two matrices by splitting the columns of $M$ in two parts: 
	\[J_1=\{1,2,\cdots, \frac{n+1}{2}\} \text{ and } J_2=\{\frac{n+1}{2},\frac{n+1}{2}+1,\cdots, n\}.\] 
	
	We have, $|J_1|=|J_2|=\frac{n+1}{2}$. Furthermore, we have 
	\[Mu=M[:;J_1][u_1, \dots, u_{\frac{n+1}{2}}]^t= [w_1,w_2, \dots, w_n]^t\] and \[\splitatcommas{Mv=M[:;J_2][u_{\frac{n+1}{2}}, \dots, u_n]^t=[s_1,s_2,\dots,s_n]^t}.\] 
	
	We have total $n$ variables $u_1,\dots,u_n$. Now divide the rows into three parts:  \[I_1=\{1,2,\dots, \frac{n-1}{2}\},~ \{\frac{n+1}{2}\}, \text{ and } I_2=\{\frac{n+1}{2}+1,\frac{n+1}{2}+2,\dots, n\}  .\]
	
	Then set, $w_i=0$ for $\frac{n+1}{2} < i \leq n$. We get
	\begin{equation}\label{thm4:1}
		\splitatcommas{M[I_2;J_1][u_1,u_2,\dots,u_{\frac{n+1}{2}}]^t=0}.  
	\end{equation}

	It has $\frac{n-1}{2}$ equations and $\frac{n+1}{2}$ variables. Thus, it has nontrivial solutions; using the Gaussian elimination method, we can compute values of $u_i$ in terms of $u_{\frac{n+1}{2}}$ for $1 \leq i \leq \frac{n-1}{2}$.\\
	
	Similarly, set $s_i=0$ for $1 \leq i < \frac{n+1}{2}$. We get
	\begin{equation}\label{thm4:2}
		M[I_1;J_2][u_{\frac{n+1}{2}}, \dots, u_n]^t=0. 
	\end{equation}
	
	It also has $\frac{n-1}{2}$ equations and $\frac{n+1}{2}$ variables. Again, it has nontrivial solutions; we can compute it in terms of $u_{\frac{n+1}{2}}$.\\
	
	Thus, we will have 
	\[Mu=[w_1,w_2, \dots, w_{\frac{n+1}{2}},0,0,\dots,0]^t \text{ and } Mv=[0,0,\dots,0, s_{\frac{n+1}{2}}, \dots, s_n]^t.\] 
	i.e., $(Mv)_i=0 ~ \forall ~ 1\leq i \leq \frac{n-1}{2} $ and $(Mu)_i=0 ~ \forall ~\frac{n+1}{2}+1 \leq i \leq n$.
	Therefore, $Mu$ and $Mv$ will form a related difference pair if $s_{\frac{n+1}{2}}=w_{\frac{n+1}{2}}$ i.e. 
	
	\begin{equation}\label{thm4:3}
		\sum_{\frac{n+1}{2}\leq j \leq n} a_{\frac{n+1}{2} j} u_j=\sum_{1\leq j \leq \frac{n+1}{2}} a_{\frac{n+1}{2} j} u_j.
	\end{equation}

	Thus, if a nontrivial solution $[u_1,u_2, \cdots, u_n]$ to the above Equations (\ref{thm4:1}), (\ref{thm4:2}), and (\ref{thm4:3}) exists, then $Mu$ and $Mv$ will form a related difference pair.
	We have 
	\[\begin{bmatrix}
		0_{\frac{n-1}{2}\times\frac{n-1}{2}}& &M[I_1;J_2]\\
		&&\\
		M[\{\frac{n+1}{2}\};J_1\backslash\{\frac{n+1}{2}\}] & \ \  0 \ \ &M[\{\frac{n+1}{2}\};J_2\backslash\{\frac{n+1}{2}\}] \\   
		&&\\M[I_2;J_1]& & 0_{\frac{n-1}{2}\times\frac{n-1}{2}}\\
	\end{bmatrix} 
	\begin{bmatrix}
		u_1\\ u_2 \\ \vdots \\ u_{n-1} \\ u_n
	\end{bmatrix} = 0. \]
	
	Threfore, if the determinant  of the matrix  \[M'= \begin{bmatrix}
		0_{\frac{n-1}{2}\times\frac{n-1}{2}}& &M[I_1;J_2]\\
		&&\\
		M[\{\frac{n+1}{2}\};J_1\backslash\{\frac{n+1}{2}\}] & \ \  0 \ \ &M[\{\frac{n+1}{2}\};J_2\backslash\{\frac{n+1}{2}\}] \\   
		&&\\M[I_2;J_1]& & 0_{\frac{n-1}{2}\times\frac{n-1}{2}}\\
	\end{bmatrix}\]
	is zero, then we will get a related differential pair.
	
	It can be observed that $M'$ is a skew-symmetric matrix. Thus, its determinant will be zero, as any skew-symmetric matrix of odd order has determinant zero.
	Thus, any odd-order symmetric matrix will has a related differential pair as constructed above.
	
\end{proof}

Thus, although symmetric matrices play an important role in cryptographic design, Theorem~\ref{thm:symmetric} shows that symmetry in odd dimension is an inherent vulnerability from the viewpoint of related-differential cryptanalysis. Motivated by this observation, we recall the well-known Cauchy-based construction, which provides a concrete class of MDS matrices vulnerable to related-differential cryptanalysis.

Given $\{x_0, x_1, \ldots, x_{n-1}\} \subseteq \mathbb{F}_{2^m}$ and $\{y_0, y_1, \ldots, y_{n-1}\} \subseteq \mathbb{F}_{2^m}$ such that $x_i+y_j \neq 0$ for all $0 \le i,j \le n-1$, the matrix $M=(a_{ij})$, where $a_{ij}=\frac{1}{x_i+y_j}$, is called a Cauchy matrix. Cauchy-based constructions of MDS matrices appear in several forms in the literature, depending on how the elements $x_i$ and $y_i$ are chosen~\cite{kcz}. A particularly important special case occurs when $y_i=l+x_i$ for some nonzero $l\in\mathbb{F}_{2^m}$. Further constructions are obtained when the $x_i$'s are taken from an additive subgroup of $\mathbb{F}_{2^m}$, and a Hadamard MDS matrix arises when this subgroup is the linear span of $n$ linearly independent elements. We refer to these as Cauchy-based constructions of Types 2, 3, and 4, respectively~\cite{kcz}.

For the Cauchy-based construction of Type 2, where $y_i=l+x_i$ for some nonzero $l\in\mathbb{F}_{2^m}$, we have
\[
a_{ij}=\frac{1}{x_i+y_j}=\frac{1}{l+x_i+x_j}=\frac{1}{l+x_j+x_i}=a_{ji}.
\]
Therefore, the resulting matrix is symmetric. Theorem~\ref{thm:symmetric} therefore yields the following corollary.

\begin{corollary}\label{Coro:Type2Cauchy-RD}
	Every Type 2 Cauchy-based MDS matrix of odd order admits a related differential pair.
\end{corollary}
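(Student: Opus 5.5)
The corollary is a direct application of Theorem~\ref{thm:symmetric}, so the plan is to check its two hypotheses for a Type 2 Cauchy-based MDS matrix $M$ of odd order $n$: that $M$ is symmetric and that $M$ is MDS. The MDS part is part of the statement, since we only consider Type 2 Cauchy-based matrices that are MDS. Recall that a Cauchy matrix with distinct $x_i$'s and distinct $y_j$'s is automatically MDS, because every square submatrix is again a Cauchy matrix with nonzero determinant. So the hypothesis amounts to the $x_i$ being pairwise distinct and $l+x_i+x_j\neq 0$ for all $i,j$.

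For symmetry, I use the computation displayed just before the corollary. With $y_j=l+x_j$ and characteristic $2$,
\[
a_{ij}=\frac{1}{x_i+l+x_j}=\frac{1}{x_j+l+x_i}=a_{ji}
\]
for all $0\le i,j\le n-1$. Hence $M=M^T$.

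Theorem~\ref{thm:symmetric} therefore applies directly with $n$ odd. It yields a nontrivial pair $(u,Mu)$, $(v,Mv)$ of related differentials, where
\[
u=[u_1,\dots,u_{\frac{n+1}{2}},0,\dots,0],\qquad v=[0,\dots,0,u_{\frac{n+1}{2}},\dots,u_n],
\]
and $[u_1,\dots,u_n]$ lies in the kernel of the odd-order skew-symmetric matrix $M'$ built in that proof.

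The only point needing any care is nontriviality: the kernel vector of $M'$ must produce nonzero differentials. I would treat this exactly as in the proof of Theorem~\ref{thm:symmetric}. Since $M'$ has odd order and is skew-symmetric, its determinant vanishes, so a nonzero kernel vector exists. The MDS property of $M$ (invertibility) then ensures that $u\neq 0$ or $v\neq 0$ gives a nonzero $Mu$ or $Mv$. No new obstacle arises beyond what the theorem already handles.
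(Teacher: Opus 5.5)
Your proof is correct and is exactly the paper's argument: the Type 2 choice $y_j=l+x_j$ gives $a_{ij}=1/(l+x_i+x_j)=a_{ji}$, so $M$ is a symmetric MDS matrix of odd order, and Theorem~\ref{thm:symmetric} applies directly. Your extra nontriviality paragraph is not needed for the corollary, and the justification you give there is not the relevant one: invertibility only says that a nonzero input has a nonzero output, whereas what rules out degenerate pairs ($u=0$, $v=0$, or $u=v$) is that every $\frac{n-1}{2}\times\frac{n-1}{2}$ minor of $M$ is nonzero, which forces $u_{\frac{n+1}{2}}\neq 0$ and full support of $u$ on $\{1,\dots,\frac{n+1}{2}\}$ and of $v$ on $\{\frac{n+1}{2},\dots,n\}$ once $n\ge 3$ (for $n=1$ the construction collapses to $u=v$).
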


In~\cite{Liu2016GeneralizedCirculant}, the authors suggest a new category of matrices known as left-circulant matrices. An $n\times n$ matrix $M$ is said to be a left-circulant matrix if each successive row is obtained by a left shift of the previous row i.e.
\[
M=l\text{-}Circ(x_1,x_2,\ldots,x_{n})
=\begin{bmatrix}
	x_1 & x_2 & \ldots & x_{n}\\
	x_{2} & x_3 & \ldots & x_{1}\\
	\vdots & \vdots &\ddots &\vdots \\
	x_n & x_1 & \ldots & x_{n-1}\\
\end{bmatrix}.
\]
It is important to note that a left-circulant matrix is symmetric. Theorem~\ref{thm:symmetric} therefore yields the following corollary.

\begin{corollary}\label{Coro:left-circulant-RD}
	Every left-circulant MDS matrix of odd order admits a related differential pair.
\end{corollary}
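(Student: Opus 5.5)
The corollary follows directly from Theorem~\ref{thm:symmetric}, so the plan is to check that every left-circulant matrix is symmetric and then invoke that theorem. Write $M=l\text{-}Circ(x_1,\ldots,x_n)$ and index rows and columns by $i,j\in\{1,\ldots,n\}$. Row $i$ is the first row cyclically shifted left by $i-1$ positions, so the $(i,j)$ entry is
\[
a_{ij}=x_{((i+j-2)\bmod n)+1}.
\]
Spot checks confirm the formula: $a_{11}=x_1$, $a_{12}=x_2$, $a_{21}=x_2$, and $a_{n2}=x_{(n \bmod n)+1}=x_1$, which matches the displayed matrix.

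The index $((i+j-2)\bmod n)+1$ is symmetric in $i$ and $j$, so $a_{ij}=a_{ji}$ for all $i,j$ and $M^T=M$. This formula is the only place where any care is needed: the rows are shifts rather than arbitrary, so one has to confirm that the entry depends only on $i+j \bmod n$. That is precisely the Hankel-type structure, and it is symmetric by construction.

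Now let $M$ be a left-circulant MDS matrix of odd order $n$. Then $M$ is a symmetric MDS matrix of odd order, so Theorem~\ref{thm:symmetric} applies. It produces vectors $u$ and $v$, supported on the index sets $J_1=\{1,\ldots,\frac{n+1}{2}\}$ and $J_2=\{\frac{n+1}{2},\ldots,n\}$ respectively, such that $(u,Mu)$ and $(v,Mv)$ are related differentials.

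The nontriviality of this pair is supplied by the theorem itself and needs no new argument. The relevant coefficient matrix $M'$ built from $M$ is skew-symmetric of odd order, hence singular, so the homogeneous system has a nonzero solution. No property of left-circulant matrices beyond symmetry is used, so the corollary is immediate once symmetry is established.
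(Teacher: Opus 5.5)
Your proposal is correct and follows the paper's argument: the paper observes that a left-circulant matrix is symmetric and invokes Theorem~\ref{thm:symmetric}, and your index formula $a_{ij}=x_{((i+j-2)\bmod n)+1}$ simply makes that symmetry explicit. One small caution on your closing remark: over $\mathbb{F}_{2^m}$, skew-symmetry is the same as symmetry and does not by itself force singularity in odd order (the identity matrix is a counterexample), so what actually makes $M'$ singular is that it is symmetric with zero diagonal, i.e.\ alternating.
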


\begin{remark}
	It is important to note that even though the MDS property is necessary to avoid related differentials, it is not sufficient, as shown in Corollaries~\ref{Coro:Type2Cauchy-RD} and~\ref{Coro:left-circulant-RD}, where every Type 2 Cauchy-based MDS matrix of odd order and every left‑circulant MDS matrix of odd order admit a related differential pair.
\end{remark}

\noindent In the next section, we prove that circulant matrices of certain orders contain related differentials.

\section{Related Differentials and Circulant Matrices}\label{Sec:RDvsCirculant}
Circulant matrices represent one of the most widely deployed construction classes for MDS diffusion layers in cryptography. Their algebraic structure enables efficient implementation through polynomial arithmetic, while their regularity simplifies hardware design~\cite{Li2016CirculantInvolutory,Liu2016GeneralizedCirculant}. However, this same structural regularity creates systematic vulnerabilities to related-differential attacks. This section provides a broad characterization of when circulant MDS matrices possess related differentials, developing three fundamental lemmas based on modular conditions on the matrix dimension and culminating in a unified theorem (Theorem~\ref{thm:circulant}) that identifies a broad class of vulnerable orders.

A circulant matrix $M$ of order $n \times n$ over $\mathbb{F}_{2^{m}}$ is completely determined by its first row vector $[x_1,x_2,\ldots,x_n]$, with each subsequent row obtained by cyclically shifting the previous row one position to the right.
\begin{definition}
	An $n\times n$ matrix $M$ is said to be a circulant matrix if its elements are determined by the elements of its first row $x_1,x_2,\ldots,x_n$ as
	\[
	M=Circ(x_1,x_2,\ldots,x_{n})
	=\begin{bmatrix}
		x_1 & x_2 & \ldots & x_{n}\\
		x_{n} & x_1 & \ldots & x_{n-1}\\
		\vdots & \vdots &\vdots &\vdots \\
		x_2 & x_3 & \ldots & x_1\\
	\end{bmatrix}.
	\]
\end{definition}

To align this matrix structure with polynomial arithmetic, it is necessary to examine its column vectors. The first column of $M$ is exactly $[x_1, x_n, \ldots, x_2]^T$. Let us define this first column as the coefficient vector $a = [a_0, a_1, \dots, a_{n-1}]^T$. In this column oriented view, each subsequent column is obtained by a cyclic downward shift of the previous one. For an input vector $b = [b_0, b_1, \dots, b_{n-1}]^T$, the matrix product $c = Mb$ computes the cyclic convolution of the coefficient vectors $a$ and $b$. Equivalently, the cyclic convolution of length $n$ vectors over $\mathbb{F}_{2^{m}}$ is isomorphic to polynomial multiplication in the quotient ring $R$~\cite{FJ77}, where
\begin{equation*}
	R = \mathbb{F}_{2^{m}}[X]/\langle X^n - 1 \rangle.
\end{equation*}
Under this correspondence, the vectors $a$ and $b$ are identified with the polynomials
\begin{equation*}
	M(X) = \sum_{i=0}^{n-1} a_i X^i \qquad \text{and} \qquad B(X) = \sum_{i=0}^{n-1} b_i X^i,
\end{equation*}
and the output vector $c$ corresponds to the polynomial $C(X) = \sum_{i=0}^{n-1} c_i X^i$ satisfying
\begin{equation*}
	C(X) \equiv M(X)B(X) \pmod{X^n - 1}.
\end{equation*}

% column vector $[a_{0}, a_{1}, \dots, a_{n-1}]^{T}$. Each subsequent column is obtained by a cyclic downward shift of the previous one. For a vector $b = [b_{0}, b_{1}, \dots, b_{n-1}]^{T}$, the product $c=Mb$ computes the cyclic convolution of the coefficient vectors a and b. Equivalently, cyclic convolution of length n vectors over $\mathbb{F}_{2^{m}}$ is isomorphic to polynomial multiplication in the quotient ring
% \begin{equation*}
	% R=\mathbb{F}_{2^{m}}[X]/\langle X^{n}-1\rangle.
	% \end{equation*}

% Under this correspondence, the vectors a and b are identified with the polynomials
% \begin{equation*}
	% M(X)=\sum_{i=0}^{n-1}a_{i}X^{i}
	% \end{equation*}
% and
% \begin{equation*}
	% B(X)=\sum_{i=0}^{n-1}b_{i}X^{i},
	% \end{equation*}
% and the output vector c corresponds to the polynomial satisfying
% \begin{equation*}
	% C(X)=\sum_{i=0}^{n-1}c_{i}X^{i}
	% \end{equation*}
% \begin{equation*}
	% C(X)\equiv M(X)B(X)~(\text{mod } X^{n}-1).
	% \end{equation*}

This polynomial framework enables a systematic analysis of related differentials by decomposing $M(X)$ according to the modular structure of its exponents modulo $n$.

\begin{lemma}\label{lemma: 4n}
	Let $M$ be a circulant matrix of order $n \times n$ over $\mathbb{F}_{2^{m}}$. If $n \equiv 0 \pmod4$ ($4|n$) then $M$ has related differentials.
\end{lemma}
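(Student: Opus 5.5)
The plan is to reduce the case $4\mid n$ to the $4\times 4$ case by restricting $M$ to vectors that are periodic with period $4$. Write $n=4k$ and work in $R=\mathbb{F}_{2^m}[X]/\langle X^n-1\rangle$ as set up above. Put
\[
S(X)=\sum_{j=0}^{k-1}X^{4j},
\]
so that $(X^4-1)S(X)=X^n-1$. A vector is $4$-periodic exactly when its polynomial has the form $U(X)S(X)$ with $\deg U<4$.

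\textbf{Step 1: $M$ acts on periodic vectors as a $4\times 4$ circulant.} Let $\overline{M}(X)\equiv M(X)\pmod{X^4-1}$, so $M(X)=\overline{M}(X)+(X^4-1)Q(X)$ for some $Q$. Then, modulo $X^n-1$,
\[
M(X)U(X)S(X)\equiv \overline{M}(X)U(X)S(X),
\]
because $(X^4-1)S(X)=X^n-1$. Reduce $\overline{M}(X)U(X)$ modulo $X^4-1$ to $\overline{C}(X)$. Using the same identity again, $M(X)U(X)S(X)\equiv \overline{C}(X)S(X)\pmod{X^n-1}$. Hence the output is again $4$-periodic, and its period block is $\overline{M}\,U$, where $\overline{M}$ is the $4\times 4$ circulant matrix whose first column is obtained by summing the coefficients of $M(X)$ in each residue class of exponents modulo $4$.

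\textbf{Step 2: lift a $4\times 4$ pair.} Suppose $(U,\overline{M}U)$ and $(V,\overline{M}V)$ are a nontrivial pair of related differentials for $\overline{M}$. Set $u=US$ and $v=VS$. Every coordinate of $u,v$ is a coordinate of $U,V$, and every coordinate of $Mu,Mv$ is a coordinate of $\overline{M}U,\overline{M}V$. Since the condition $x_i y_i(x_i\oplus y_i)=0$ is checked coordinate by coordinate, the pairs $(u,v)$ and $(Mu,Mv)$ are related differences. Nontriviality (nonzero and distinct) is preserved because $U\mapsto US$ is injective.

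\textbf{Step 3: the $4\times 4$ circulant $\overline{M}$ always has related differentials.} We have no control over $\overline{M}$, so we split into two cases.
\begin{itemize}
\item If $\overline{M}$ is not MDS, Theorem~\ref{thm:non-mds} gives a pair directly.
\item If $\overline{M}$ is MDS, we invoke the result of Daemen and Rijmen~\cite{daemen2009new} that every $4\times 4$ circulant MDS matrix admits related differentials.
\end{itemize}

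I expect the main obstacle to be making Step 3 self-contained instead of only citing~\cite{daemen2009new}. For that I would take $\overline{M}=Circ(a,b,c,d)$ and look for inputs of the shape $U=(x,0,y,0)$ and $V=(0,x',0,y')$, which are automatically related differences. I would then try to impose, by solving linear equations in the entries, that $\overline{M}U$ and $\overline{M}V$ are equal or disjoint in each coordinate, exploiting the circulant symmetry $\overline{M}V=X\cdot\overline{M}(\cdot)$ to cut down the number of independent equations. If a uniform explicit choice fails, I would fall back on the citation.

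A last point to check is whether "nontrivial" must exclude $u=v$. The trivial choice $U=V=(1,1,1,1)$ would satisfy the conditions, so the pair taken in Step 3 must be one with $U\neq V$.
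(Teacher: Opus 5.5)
Your proof is correct, but it takes a different route from the paper's.

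The paper works directly at order $n=4t$. It takes $B=M_{even}$ and $B'=M_{odd}$ and uses that squaring is additive in characteristic $2$. So $M_{even}^2$ is supported on exponents $\equiv 0 \pmod 4$, $M_{odd}^2$ on exponents $\equiv 2 \pmod 4$, and the shared cross term $M_{even}M_{odd}$ on odd exponents. The hypothesis $4\mid n$ guarantees that these classes survive reduction modulo $X^n-1$.

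You instead observe that the $4$-periodic vectors form an $M$-invariant subspace on which $M$ acts as the folded circulant $\overline{M}$, and push all the work into the case $n=4$. As written, that case is only cited. For $n=4$ your Steps~1--2 are vacuous, so the smallest instance of the lemma is imported from \cite{daemen2009new}, and your fallback sketch is not yet an argument.

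The paper's trick closes this gap immediately. Let $\bar a_r$ be the sum of the coefficients of $M(X)$ whose exponent is $\equiv r \pmod 4$, and take $U=(\bar a_0,0,\bar a_2,0)$ and $V=(0,\bar a_1,0,\bar a_3)$, exactly the shape you guessed. Modulo $X^4-1$ one has $(\bar a_0+\bar a_2X^2)^2=(\bar a_0+\bar a_2)^2$ and $(\bar a_1X+\bar a_3X^3)^2=(\bar a_1+\bar a_3)^2X^2$, while the cross term has only odd exponents. Hence the two outputs coincide in the odd coordinates, the second output vanishes in coordinate $0$, and the first vanishes in coordinate $2$. In your MDS case all $\bar a_r\neq 0$, so the pair is nontrivial.

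What your route buys is generality, since Steps~1 and~2 never use the number $4$. For any $d\mid n$, set $S_d(X)=\sum_{j=0}^{n/d-1}X^{dj}$. The $d$-periodic vectors are $M$-invariant, and $M$ acts on them as the $d\times d$ circulant given by $M(X)\bmod (X^d-1)$. So related differentials lift from order $d$ to every multiple of $d$.

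Applying Lemma~\ref{lemma:odd} (or Theorem~\ref{thm:non-mds}) to the folded matrix, any odd divisor $d\ge 3$ of $n$ suffices. This reproves Lemma~\ref{lemma: 3n}. It also covers every $n\equiv \pm 2 \pmod{12}$ with $n>2$, because such $n$ is twice an odd number $\ge 5$; for instance, $n=10$ folds to $d=5$. These are exactly the orders Theorem~\ref{thm:circulant} leaves open, so in fact every circulant of order $n\ge 3$ has related differentials.

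The paper's construction, in exchange, is explicit and self-contained, but each modulus needs its own ad hoc decomposition.
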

\begin{proof}
	
	Let $M(X)\in R$ be the polynomial representation of the circulant matrix $M$ and $n=4t$. We partition $M(X)$ into two distinct polynomials containing its even degree and odd degree terms, respectively:
	\begin{align*}
		M_{even}(X)&=a_{0}+a_{2}X^{2}+a_{4}X^{4}+\dots+x_{n-2}x^{n-2}=\sum_{k=0}^{2t-1}a_{2k}X^{2k} \\
		M_{odd}(X)&=a_{1}X^{1}+a_{3}X^{3}+a_{5}X^{5}+\dots+x_{n-1}x^{n-1}=\sum_{k=0}^{2t-1}a_{2k+1}X^{2k+1}
	\end{align*}
	such that $M(X)=M_{even}(X)+M_{odd}(X)$.
	
	We define two input difference polynomials, $B(X)$ and $B^{\prime}(X)$ directly from the components of the transformation matrix:
	\begin{equation*}
		B(X)=M_{even}(X) \qquad \text{and} \qquad B^{\prime}(X)=M_{odd}(X).
	\end{equation*}
	% \begin{equation*}
		% B^{\prime}(X)=M_{odd}(X).
		% \end{equation*}
	
	By definition, $B(X)$ contains nonzero coefficients exclusively at even indices, and $B^{\prime}(X)$ contains nonzero coefficients exclusively at odd indices. Consequently, they occupy perfectly disjoint coordinate spaces. For any index $j\in\{0,1,\dots,n-1\}$, the product of their corresponding coefficients evaluates to exactly zero $(b_{j}b_{j}^{\prime}=0)$. This trivially satisfies the formal algebraic condition for related differences $b_{j}b_{j}^{\prime}(b_{j}+b_{j}^{\prime})=0$.
	
	Let $C(X)$ and $C^{\prime}(X)$ represent the output difference polynomials resulting from the linear transformation. Using the ring isomorphism:
	\begin{align*}
		C(X) &= B(X)M(X) ~(\text{mod } X^{n}-1) = M_{even}(X)(M_{even}(X)+M_{odd}(X)) \\
		C^{\prime}(X) &= B^{\prime}(X)M(X) ~(\text{mod } X^{n}-1) = M_{odd}(X)(M_{even}(X)+M_{odd}(X))
	\end{align*}
	
	Therefore, we have
	\begin{align*}
		C(X) &= M_{even}(X)^{2}+M_{even}(X)M_{odd}(X) ~(\text{mod } X^{n}-1) \\
		C^{\prime}(X) &= M_{odd}(X)^{2}+M_{even}(X)M_{odd}(X) ~(\text{mod } X^{n}-1)
	\end{align*}
	
	Note that
	% \begin{equation*}
		% A_{even}(X)^{2}=(a_{0}+a_{2}X^{2}+a_{4}X^{4}+\dots)^{2}=a_{0}^{2}+a_{2}^{2}X^{4}+a_{4}^{2}X^{8}+\dots=\sum a_{2k}^{2}X^{4k} \pmod{X^{n}-1}
		% \end{equation*}
	
	\begin{equation*}
		\begin{aligned}
			M_{even}(X)^{2}&=(a_{0}+a_{2}X^{2}+a_{4}X^{4}+\dots)^{2}=a_{0}^{2}+a_{2}^{2}X^{4}+a_{4}^{2}X^{8}+\dots\\
			&=\sum_{k=0}^{2t-1}a_{2k}^{2}X^{4k} ~(\text{mod } X^{n}-1)
		\end{aligned}
	\end{equation*}
	
	Thus, $M_{even}(X)^{2}$ strictly generates terms with exponents $j\equiv0 \pmod{4}$.
	
	Similarly,
	\begin{equation*}
		\begin{aligned}
			M_{odd}(X)^{2}&=(a_{1}X+a_{3}X^{3}+\dots)^{2}=a_{1}^{2}X^{2}+a_{3}^{2}X^{6}+\dots\\
			&=\sum_{k=0}^{2t-1}a_{2k+1}^{2}X^{4k+2} ~(\text{mod } X^{n}-1).
		\end{aligned}
	\end{equation*}
	
	Thus, $M_{odd}(X)^{2}$ strictly generate terms with exponents $j\equiv2 \pmod{4}$. It produces no constant term and no terms where $j \not\equiv 2 \pmod{4}$.
	
	$M_{even}(X)M_{odd}(X)$ represents the product of even and odd exponents. The sum of an even integer and an odd integer is always odd. Thus, the terms strictly generate terms with odd exponents.
	
	Also, note that even after reduction $\pmod{X^{n}-1}$, the indices will still remain in $j\equiv0 \pmod{4}$ for $M_{even}(X)^{2}$ and $j\equiv2 \pmod{4}$ for $M_{odd}(X)^{2}$ as $n$ is a multiple of $4$ ($n=4t$). The same for $M_{even}(X)M_{odd}(X)$.
	\vspace{1em}
	
	\noindent \textbf{Case A (Odd indices j):} For any odd index j, neither squared polynomial contributes terms. The coefficients $c_{j}$ and $c_{j}^{\prime}$ are derived entirely and exclusively from the shared term $M_{even}(X)M_{odd}(X)$. Therefore, $c_{j}=c_{j}^{\prime}$. In characteristic 2, this implies $c_{j}+c_{j}^{\prime}=0$.
	\vspace{1em}
	
	\noindent \textbf{Case B (Even indices $j\equiv0 \pmod{4}$):} For these indices, $M_{odd}(X)^{2}$ yields no terms, and the product $M_{even}(X)M_{odd}(X)$ yields no terms here. Therefore, the corresponding coefficient in the related output $C^{\prime}(X)$ evaluates to exactly zero: $c_{j}^{\prime}=0$.
	\vspace{1em}
	
	\noindent \textbf{Case C (Even indices $j\equiv2 \pmod{4}$):} For these indices, $M_{even}(X)^{2}$ yields no terms, and the product $M_{even}(X)M_{odd}(X)$ yields no terms here. Therefore, the corresponding coefficient in the primary output $C(X)$ evaluates to exactly zero: $c_{j}=0.$
	
	Combining all three cases, we get $c_jc_j^\prime(c_j+c_j^\prime)=0 \ \forall \ 0 \leq j \leq n-1$.
	Thus, $(B,C)$ and $(B^\prime, C^\prime)$ forms related differentials.
	
\end{proof}

The second structural vulnerability arises when the matrix order is divisible by 3, enabling a three-way polynomial decomposition. 

\begin{lemma}\label{lemma: 3n}
	Let $M$ be a circulant matrix of order $n \times n$ over $\mathbb{F}_{2^{m}}$. If $n \equiv 0 \pmod3$ ($3|n$) then $M$ has related differentials.
\end{lemma}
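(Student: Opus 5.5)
The plan is to mirror the proof of Lemma~\ref{lemma: 4n}, now splitting $M(X)$ into three parts according to the residue of the exponent modulo $3$. Write $n=3t$ and
\[
M(X)=M_0(X)+M_1(X)+M_2(X),\qquad M_r(X)=\sum_{k=0}^{t-1}a_{3k+r}X^{3k+r}.
\]
Since $3\mid n$, reduction modulo $X^n-1$ preserves exponents modulo $3$. Hence a product $M_iM_j$ has support only in exponents $\equiv i+j \pmod 3$. Taking the two summands $M_iM_j$ and $M_jM_i$ together, the cross terms occur twice and cancel in characteristic $2$.

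For the inputs I take
\[
B(X)=M_1(X)+M_2(X),\qquad B'(X)=M_0(X)+M_2(X).
\]
The input pair is checked coordinate by coordinate:
\begin{itemize}
\item on exponents $\equiv 0$ we have $b_j=0$;
\item on exponents $\equiv 1$ we have $b'_j=0$;
\item on exponents $\equiv 2$ we have $b_j=b'_j$, so $b_j+b'_j=0$.
\end{itemize}
So $(B,B')$ is a related difference.

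Next I expand the outputs $C=BM$ and $C'=B'M$ modulo $X^n-1$ and sort the terms by class.

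For $C=(M_1+M_2)(M_0+M_1+M_2)$:
\begin{itemize}
\item class $0$ receives only $M_1M_2+M_2M_1=0$, so $c_j=0$ there;
\item class $1$ receives $M_0M_1+M_2^2$;
\item class $2$ receives $M_0M_2+M_1^2$.
\end{itemize}

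For $C'=(M_0+M_2)(M_0+M_1+M_2)$:
\begin{itemize}
\item class $0$ receives $M_0^2+M_1M_2$;
\item class $1$ receives $M_0M_1+M_2^2$;
\item class $2$ receives only $M_0M_2+M_2M_0=0$, so $c'_j=0$ there.
\end{itemize}

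On class $1$ the two outputs are therefore identical, so $c_j+c'_j=0$. Combining the three classes gives $c_jc'_j(c_j+c'_j)=0$ for every $j$. Thus $(B,C)$ and $(B',C')$ form related differentials.

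The only delicate step is nontriviality: we need $B\neq 0$, $B'\neq 0$ and $B\neq B'$. The last condition is equivalent to $M_0\neq M_1$. If $M$ is not MDS, the lemma already follows from Theorem~\ref{thm:non-mds}. If $M$ is MDS, every entry is nonzero (its $1\times1$ minors), so each $M_r$ is a nonzero polynomial. Then $B,B'\neq 0$, and $M_0\neq M_1$ because their supports are disjoint and nonempty.

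I expect the main obstacle to be finding the right pair of inputs. The naive analogue $B=M_1$, $B'=M_2$ leaves classes $1$ and $2$ uncontrolled. The choice above works because the characteristic-$2$ cancellation of cross terms kills one output class in each of $C$ and $C'$, and squaring swaps classes $1$ and $2$, which forces the remaining class to agree.
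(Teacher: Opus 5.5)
Your proof is correct and is essentially the paper's argument: the paper uses the same mod-$3$ splitting with the pair $B=M_0+M_1$, $B'=M_0+M_2$, and your pair $(M_1+M_2,\,M_0+M_2)$ is just two other members of the same related triplet $\{M_0+M_1,\,M_0+M_2,\,M_1+M_2\}$, so the class-by-class check is identical up to relabeling. Your explicit nontriviality check ($M_r\neq 0$ for every $r$ in the MDS case, and Theorem~\ref{thm:non-mds} otherwise) covers a point the paper's proof leaves unaddressed, and it is a worthwhile addition.
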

\begin{proof}
	
	Let $M(X)\in R$ be the polynomial representation of the circulant matrix $M$ and $n=3t$. We partition $M(X)$ into three distinct polynomials $M_j$ containing its $j \pmod3$ degree terms, respectively for $0\leq j \leq2$:
	\begin{align*}
		M_{0}(X)&=a_{0}+a_{3}X^{3}+a_{6}X^{6}+\dots+x_{n-3}x^{n-3}=\sum_{k=0}^{t-1}a_{3k}X^{3k} \\
		M_{1}(X)&=a_{1}X^{1}+a_{4}X^{4}+a_{7}X^{7}+\dots+x_{n-2}x^{n-2}=\sum_{k=0}^{t-1}a_{3k+1}X^{3k+1}\\
		M_{2}(X)&=a_{2}X^{2}+a_{5}X^{5}+a_{8}X^{8}+\dots+x_{n-1}x^{n-1}=\sum_{k=0}^{t-1}a_{3k+2}X^{3k+2}\\
	\end{align*}
	such that $M(X)=M_{0}(X)+M_{1}(X)+M_2(X)$.
	
	We define two input difference polynomials, $B(X)$ and $B^{\prime}(X)$ directly from the components of the transformation matrix:
	\begin{equation*}
		B(X)=M_{0}(X)+M_1(X)
	\end{equation*}
	\begin{equation*}
		B^{\prime}(X)=M_{0}(X)+M_2(X).
	\end{equation*}
	This implies 
	\begin{equation*}
		B(X)+B^{\prime}(X)=M_{1}(X)+M_2(X).
	\end{equation*}
	
	By definition, $B(X)$ contains nonzero coefficients exclusively at $0$ and $1 \pmod 3$ indices, and $B^{\prime}(X)$ contains nonzero coefficients exclusively at $0$ and $2 \pmod 3$ indices. Similarly, $B(X)+B^{\prime}(X)$ contains nonzero coefficients exclusively at $1$ $2 \pmod 3$ indices.
	Consequently, for any index $j\in\{0,1,\dots,n-1\}$, the product of their corresponding coefficients evaluates to exactly zero  $b_{j}b_{j}^{\prime}(b_{j}+b_{j}^{\prime})=0$. This trivially satisfies the formal algebraic condition for related differences.
	
	Let $C(X)$ and $C^{\prime}(X)$ represent the output difference polynomials resulting from the linear transformation. Using the ring isomorphism:
	\begin{align*}
		C(X) &= B(X)M(X) \\
		&= (M_{0}(X)+M_1(X))(M_{0}(X)+M_{1}(X)+M_2(X)) ~(\text{mod } X^{n}-1)  \\
		C^{\prime}(X) &= B^{\prime}(X)M(X) \\
		&= (M_{0}(X)+M_2(X))(M_{0}(X)+M_{1}(X)+M_2(X)) ~(\text{mod } X^{n}-1)
	\end{align*}
	
	Therefore, we have
	\begin{align*}
		C(X) = M_{0}(X)^{2}+M_{1}(X)^{2}+M_{0}(X)M_{2}(X)+M_1(X)M_2(X) ~(\text{mod } X^{n}-1) \\
		C^{\prime}(X) = M_{0}(X)^{2}+M_{2}(X)^{2}+M_{0}(X)M_{1}(X)+M_1(X)M_2(X) ~(\text{mod } X^{n}-1) 
	\end{align*}
	We also have 
	\[C(X)+C^{\prime}(X) = M_{1}(X)^{2}+M_{2}(X)^{2}+M_{0}(X)M_{1}(X)+M_0(X)M_2(X) ~(\text{mod } X^{n}-1)\]

	Note that
	\begin{equation*}
		\begin{aligned}
			M_{0}(X)^{2}&=(a_{0}+a_{3}X^{3}+a_{6}X^{6}+\dots)^{2} \\
			&=\sum_{k=0}^{2t-1}a_{3k}^{2}X^{6k} \pmod{X^{n}-1},\\
			\text{and} & \\
			M_1(X)M_2(X) &= \left(\sum_{k=0}^{t-1}a_{3k+1}X^{3k+1}\right)\left(\sum_{k=0}^{t-1}a_{3k+2}X^{3k+2}\right) \\
			&= \sum_{j=0}^{t-1}\sum_{i=0}^{t-1}a_{3j+1}a_{3i+2}X^{3(i+j+1)}
		\end{aligned}
	\end{equation*}
	
	Thus, $M_{0}(X)^{2}$ and $M_1(X)M_2(X)$ strictly generates terms with exponents $k\equiv0 \pmod{3}$.
	
	Similarly,
	\begin{equation*}
		\begin{aligned}
			M_{1}(X)^{2}&=\left(\sum_{k=0}^{t-1}a_{3k+1}X^{3k+1}\right)^2 
			= \sum_{k=0}^{t-1}a_{3k+1}^2X^{6k+2},& \\
			\text{and} \\
			M_0(X)M_2(X) &= \left(\sum_{k=0}^{t-1}a_{3k}X^{3k}\right)\left(\sum_{k=0}^{t-1}a_{3k+2}X^{3k+2}\right) \\
			&= \sum_{j=0}^{t-1}\sum_{i=0}^{t-1}a_{3j}a_{3i+2}X^{3(i+j)+2}
		\end{aligned}
	\end{equation*}
	
	Thus, $M_{1}(X)^{2}$ and $M_0(X)M_2(X)$ strictly generate terms with exponents $j\equiv 2 \pmod{3}$. It produces no terms where $j \not\equiv 2 \pmod{3}$.
	
	Similarly,
	\begin{equation*}
		\begin{aligned}
			M_{2}(X)^{2}&=\left(\sum_{k=0}^{t-1}a_{3k+2}X^{3k+2}\right)^2 
			= \sum_{k=0}^{t-1}a_{3k+2}^2X^{6k+4},& \\
			\text{and} \\
			M_0(X)M_1(X) &= \left(\sum_{k=0}^{t-1}a_{3k}X^{3k}\right)\left(\sum_{k=0}^{t-1}a_{3k+1}X^{3k+1}\right) \\
			&= \sum_{j=0}^{t-1}\sum_{i=0}^{t-1}a_{3j}a_{3i+1}X^{3(i+j)+1}
		\end{aligned}
	\end{equation*}
	
	Thus, $M_{2}(X)^{2}$ and $M_0(X)M_1(X)$ strictly generate terms with exponents $j\equiv 1 \pmod{3}$. It produces no terms where $j \not\equiv 1 \pmod{3}$.
	
	% \vspace{1em}
	
	Also note that even after reduction $\pmod{X^{n}-1}$, any term in $M_i(X)M_j(X)$ will still remain in $M_i(X)M_j(X)$, as $n$ is multiple of 3 ($n=3t$).
	\vspace{1em}
	
	\noindent \textbf{Case A ( $j \equiv 0 \pmod 3$):} For any index $j$ with $j \equiv 0 \pmod 3$, only $M_0(X)^2$ and $M_1(X)M_2(X)$ polynomial contributes terms. The coefficients $c_{j}$ and $c_{j}^{\prime}$ are derived entirely and exclusively from the shared terms $M_0(X)^2$ and $M_{1}(X)M_{2}(X)$. Therefore, $c_{j}=c_{j}^{\prime}$. In characteristic 2, this implies $c_{j}+c_{j}^{\prime}=0$.
	\vspace{1em}
	
	\noindent \textbf{Case B ( $j\equiv1 \pmod{3}$):} For these indices, only  $M_{2}(X)^{2}$ and $M_0(X)M_1(X)$ contribute terms. $C(X)$ does not contain any of these terms, therefore, the corresponding coefficient in the related output $C(X)$ evaluates to exactly zero: $c_{j}=0$.
	\vspace{1em}
	
	\noindent \textbf{Case C ( $j\equiv2 \pmod{3}$):} For these indices only $M_{1}(X)^{2}$ and $M_0(X)M_2(X)$ contribute terms. Since, $C(X)^\prime$ does not contain any of these terms, the corresponding coefficient in the primary output $C(X)^\prime$ evaluates to exactly zero: $c_{j}=0.$\\
	
	Combining all three cases, we get $c_jc_j^\prime(c_j+c_j^\prime)=0 \ \forall \ 0 \leq j \leq n-1$.
	Thus, $(B,C)$ and $(B^\prime, C^\prime)$ forms related differentials.
	
\end{proof}

The third structural result exploits the symmetry property of odd-order left-circulant matrices (see Corollary~\ref{Coro:left-circulant-RD}) and relies on the following proposition.

\begin{proposition}\label{Prop:permutationRD}
	Let $M$ be an $n \times n$ matrix. Then the existence of related differentials is invariant under the transformation $PMQ$, where $P$ and $Q$ are permutation matrices.
\end{proposition}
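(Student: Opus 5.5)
The plan is to transport related differentials explicitly from $M$ to $PMQ$ and back, using the fact that permutation matrices only reorder coordinates. The key observation is that the related-difference condition is coordinatewise: $(u,v)$ are related differences iff $u_i v_i (u_i\oplus v_i)=0$ for every $i$. If $\pi$ is any permutation of $\{1,\dots,n\}$ and $P$ the corresponding permutation matrix, then $(Pu)_i=u_{\pi(i)}$ and $(Pv)_i=v_{\pi(i)}$. So the family of conditions for $(Pu,Pv)$ is exactly the family for $(u,v)$, listed in a different order. Hence $(u,v)$ are related differences iff $(Pu,Pv)$ are. I would state this as a preliminary observation. It also covers $P^{-1}=P^{T}$, which is again a permutation matrix.

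Next, suppose $M$ admits related differentials $(u,Mu)$ and $(v,Mv)$, with $u,v$ not both zero. I would set $u'=Q^{-1}u$ and $v'=Q^{-1}v$, and consider the map $N=PMQ$. Then
\[
Nu'=PMQQ^{-1}u=P(Mu), \qquad Nv'=P(Mv).
\]
Since $(u,v)$ are related differences and $Q^{-1}$ is a permutation matrix, the observation shows that $(u',v')$ are related differences. Likewise, $(Mu,Mv)$ are related differences, so $(P Mu, P Mv)=(Nu',Nv')$ are related differences. Therefore $(u',Nu')$ and $(v',Nv')$ form a related differential pair for $N$.

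This pair is nontrivial in the same sense as the original, because $Q^{-1}$ is a bijection. In particular, $u'=0$ iff $u=0$, $u'=v'$ iff $u=v$, and so on.

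For the converse, I would note that $M=P^{-1}NQ^{-1}$, where $P^{-1}$ and $Q^{-1}$ are again permutation matrices. Applying the same argument to $N$ then gives the reverse implication, so existence of related differentials is invariant in both directions.

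The only point needing care is this nontriviality/degeneracy bookkeeping: whatever "trivial" related differentials are excluded (for example $u=0$, $v=0$, or $u=v$), they must be mapped to themselves under $u\mapsto Q^{-1}u$. This holds because the correspondence is linear and bijective, and these degenerate configurations are defined by linear equalities. I would check this explicitly, and otherwise expect no real obstacle.
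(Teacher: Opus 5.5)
Your proposal is correct and is essentially the paper's proof. You pull the inputs back by $Q^{-1}$, use $(PMQ)Q^{-1}u = PMu$ together with the fact that permutations only reorder the coordinatewise related-difference conditions, and get the converse from $M = P^{-1}(PMQ)Q^{-1}$. Your extra check that nontriviality survives the bijection $u \mapsto Q^{-1}u$ is a harmless refinement that the paper leaves implicit.
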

\begin{proof}
	Let $P$ and $Q$ be permutation matrices, and let $u=[u_1,u_2,\dots,u_n]$ and $v=[v_1,v_2,\dots,v_n]$ be related differences such that $(u,Mu)$ and $(v,Mv)$ form a related differential pair for $M$. Then 
	\[
	u_i \cdot v_i \cdot (u_i \oplus v_i) = 0 \text{ and }  (Mu)_i \cdot (Mv)_i \cdot ((Mu)_i \oplus (Mv)_i) = 0 
	\quad \forall ~i \in \{1,2\dots,n\}.
	\]
	Define new vectors
	\[
	u' = Q^{-1}u \qquad \text{and} \qquad v' = Q^{-1}v.
	\]
	Since $Q^{-1}$ is a permutation matrix, $u'$ and $v'$ are obtained from $u$ and $v$ by permuting their coordinates. Hence, for every $i$, we still have
	\[
	u'_i \cdot v'_i \cdot (u'_i \oplus v'_i)=0.
	\]
	
	Now
	\[
	(PMQ)u' = PMu \qquad \text{and} \qquad (PMQ)v' = PMv.
	\]
	Since $P$ is also a permutation matrix, the vectors $PMu$ and $PMv$ are obtained from $Mu$ and $Mv$ by permuting their coordinates. Therefore, for every $i$,
	\[
	\big((PMQ)u'\big)_i \cdot \big((PMQ)v'\big)_i \cdot
	\left(\big((PMQ)u'\big)_i \oplus \big((PMQ)v'\big)_i\right)=0.
	\]
	Thus, $u'$ and $v'$ are related differences for $PMQ$. The converse follows by applying the same argument to $M = P^{-1}(PMQ)Q^{-1}$. This completes the proof.
	% Now  $((PMQ)u)_i=(Mu)_i \quad \forall ~i \in \{1,2\dots,n\}$. Then, we have 
	%\begin{align*}
	%    &((D_1M_1D_2)u)_i \cdot ((D_1M_1D_2)v)_i \cdot (((D_1M_1D_2)u)_i \oplus ((D_1M_1D_2)v)_i)\\
	%&= \alpha_i^3\beta_i^3(M_1u)_i \cdot (M_1v)_i \cdot ((M_1u)_i \oplus (M_1v)_i) = 0 
	%\quad \forall ~i \in \{1,2\dots,n\}.
	%\end{align*}

	% Similarly, we can show that if $u$ and $v$ are not related differences, then $(D_1M_1D_2)u$ and $(D_1M_1D_2)v$ are also not related differences. 
	
\end{proof}

It is worth noting that a left-circulant matrix is a row-permuted circulant matrix. More specifically, for $M=Circ(x_1,x_2,\ldots,x_n)$,  we have  $PM=l\text{-}Circ(x_1,x_2,\ldots,x_n)$, where $P$ is permutation matrix given by   
\[
P=
\begin{bmatrix}
	1&0&0&\ldots&0&0&0\\ 
	0&0&0&\ldots&0&0&1\\
	0&0&0&\ldots&0&1&0\\
	\vdots&\vdots&\vdots&\ldots&\vdots&\vdots&\vdots\\
	0&1&0&\ldots&0&0&0\\
\end{bmatrix}.
\]
We now use this fact to prove the following result on circulant matrices.

% \begin{remark} \cite[Ramark 30]{kcz} \label{l-circ_remark}
	%  It is worth mentioning that a left-circulant matrix is a row-permuted circulant matrix. More specifically, for $M=Circ(x_1,x_2,\ldots,x_n)$,  we have  $PM=l\text{-}Circ(x_1,x_2,\ldots,x_n)$, where $P$ is permutation matrix given by   
	%  \[
	%  P=
	%  \begin{bmatrix}
		%     1&0&0&\ldots&0&0&0\\ 
		%     0&0&0&\ldots&0&0&1\\
		%     0&0&0&\ldots&0&1&0\\
		%     \vdots&\vdots&\vdots&\ldots&\vdots&\vdots&\vdots\\
		%      0&1&0&\ldots&0&0&0\\
		%  \end{bmatrix}
	%  .\]
	% \end{remark}

\begin{lemma}\label{lemma:odd}
	Let $M$ be a circulant matrix of order $n \times n$ over $\mathbb{F}_{2^{m}}$. If $n \equiv 1 \pmod2$ ($n$ is odd) then $M$ has related differentials.
\end{lemma}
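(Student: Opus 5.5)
The plan is to reduce Lemma~\ref{lemma:odd} to the symmetric case via a row permutation. Let $M=Circ(x_1,\ldots,x_n)$ with $n$ odd, and let $P$ be the permutation matrix displayed above, so that $PM=l\text{-}Circ(x_1,\ldots,x_n)$. A left-circulant matrix is symmetric, since its $(i,j)$ entry is $x_{i+j-1 \bmod n}$, which is invariant under $i\leftrightarrow j$. By Proposition~\ref{Prop:permutationRD} with $Q=I$, $M$ has related differentials if and only if $PM$ does. It therefore suffices to show that the odd-order symmetric matrix $PM$ admits a related differential pair.

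There are two cases. If $M$ is not MDS, Theorem~\ref{thm:non-mds} already gives a nontrivial related differential for $M$, and we are done without using the permutation. If $M$ is MDS, then $PM$ is also MDS, because permuting rows only permutes the rows of each square submatrix and so preserves nonsingularity. Thus $PM$ is an odd-order symmetric MDS matrix, and Theorem~\ref{thm:symmetric}, or equivalently Corollary~\ref{Coro:left-circulant-RD}, yields a related differential pair $(u,PMu)$, $(v,PMv)$. Transporting it back through Proposition~\ref{Prop:permutationRD}, the pair $(u,Mu)$, $(v,Mv)$ is related for $M$: the input vectors are unchanged, and $Mu=P^{-1}(PMu)$ is a coordinate permutation of $PMu$, likewise for $Mv$.

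The main point needing care is that the pair produced in Theorem~\ref{thm:symmetric} is nontrivial. That proof obtains a nonzero kernel vector of the odd-order skew-symmetric matrix $M'$. We must check that the resulting $u$ and $v$ are not both zero, which holds because the kernel vector is nonzero and its coordinates are split between $u$ and $v$. If one also wants $u\neq v$, or both nonzero, the MDS property should handle this. If, say, $v=0$, then $u$ has weight at most $\frac{n+1}{2}$ and $Mu$ has at most $\frac{n+1}{2}$ nonzero entries, giving $\mathrm{wt}(u)+\mathrm{wt}(Mu)\le n+1$. This is consistent with MDS only at the boundary, so it needs care. I would accept the theorem's notion of nontriviality as inherited from Theorem~\ref{thm:symmetric} rather than strengthen it here.
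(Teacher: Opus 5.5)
Your proof is correct and follows the paper's own argument: row-permute $M$ to the symmetric left-circulant $PM$, apply Theorem~\ref{thm:symmetric} via Corollary~\ref{Coro:left-circulant-RD}, and pull the pair back with Proposition~\ref{Prop:permutationRD}; your separate handling of the non-MDS case by Theorem~\ref{thm:non-mds} is a small improvement, since the paper's proof cites a corollary stated only for MDS matrices. The nontriviality you left open does follow from MDS: $v=0$ forces the shared coordinate $u_{(n+1)/2}=0$, so $\mathrm{wt}(u)+\mathrm{wt}(PMu)\le \frac{n-1}{2}+\frac{n+1}{2}=n$, which is a strict contradiction rather than a boundary case ($u=0$ is symmetric), and $u=v$ would make $u$ supported on coordinate $\frac{n+1}{2}$ alone, which (\ref{thm4:1}) rules out when $n\ge 3$ because $PM$ has no zero entries (for $n=1$ no nontrivial pair exists at all, a degenerate case that both you and the paper tacitly exclude).
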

\begin{proof}
	Let $M$ be a circulant matrix of odd order. By the above observation, there exist a left-circulant matrix $M'$ and a permutation matrix $P$ such that $M=PM'$. From Proposition~\ref{Prop:permutationRD}, we know that related-differential property is invariant under left multiplication by a permutation matrix. Moreover, left-circulant matrix is always symmetric, so by Corollary~\ref{Coro:left-circulant-RD}, $M'$ has a pair of related differentials. It follows that $M$ also admits a related differential pair.
	
\end{proof}

% Combining Lemma \ref{lemma: 4n}, Lemma \ref{lemma: 3n} and Lemma \ref{lemma:odd}, we get the following theorem.

We now combine the Lemma \ref{lemma: 4n}, Lemma \ref{lemma: 3n}, and Lemma \ref{lemma:odd} to obtain a characterization of circulant matrices with related differentials.

\begin{theorem}\label{thm:circulant}
	Let $M$ be a circulant matrix of order $n \times n$ over $\mathbb{F}_{2^{m}}$. If $n \not\equiv \pm 2 \pmod{12}$, then $M$ has related differentials.
\end{theorem}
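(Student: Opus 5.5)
The plan is a residue-class case analysis modulo $12$ that places every admissible $n$ under one of Lemma~\ref{lemma: 4n}, Lemma~\ref{lemma: 3n}, or Lemma~\ref{lemma:odd}. All three lemmas are stated for an arbitrary circulant matrix of the given order, and the theorem has no extra hypotheses, so no further structural argument about $M$ should be needed. It only remains to check that the three divisibility conditions together cover every residue $r \equiv n \pmod{12}$ with $r \notin \{2,10\}$.

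Write $n = 12q + r$ with $0 \le r \le 11$. Since $4 \mid 12$ and $3 \mid 12$, the conditions $2 \mid n$, $3 \mid n$ and $4 \mid n$ depend only on $r$. I will go through the residues as follows.
\begin{itemize}
\item If $r \in \{1,3,5,7,9,11\}$, then $n$ is odd, and Lemma~\ref{lemma:odd} applies.
\item If $r \in \{0,4,8\}$, then $4 \mid n$, and Lemma~\ref{lemma: 4n} applies.
\item If $r = 6$, then $n = 6(2q+1)$ is divisible by $3$, and Lemma~\ref{lemma: 3n} applies. Note that $r=0$ is also covered here.
\end{itemize}
The residues not yet treated are exactly $r = 2$ and $r = 10$, that is, $n \equiv \pm 2 \pmod{12}$. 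These are the ones the hypothesis excludes. For them $n \equiv 2 \pmod 4$ and $3 \nmid n$, so none of the lemmas applies, which explains why the exceptional set has exactly this form.

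There is no genuine obstacle at the level of the theorem, since it is pure bookkeeping once the lemmas are available. The only subtlety is in Lemma~\ref{lemma:odd}, whose proof passes through Corollary~\ref{Coro:left-circulant-RD}, and hence through Theorem~\ref{thm:symmetric}, which was stated for MDS matrices. To make the theorem hold for every circulant $M$, I would handle this as follows. If $M$ is not MDS, Theorem~\ref{thm:non-mds} supplies related differentials directly. If $M$ is MDS, the symmetric-matrix argument applies. In fact, the skew-symmetry argument in Theorem~\ref{thm:symmetric} does not use the MDS property at all, so either route closes the gap.

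I would also remark that the related differentials obtained should be nontrivial. For Lemmas~\ref{lemma: 4n} and~\ref{lemma: 3n}, if the constructed input polynomials vanish, I would fall back on the non-MDS case, because an MDS circulant has all entries nonzero and therefore gives nonzero $M_{even}, M_{odd}$ and nonzero $M_0 + M_1$, $M_0 + M_2$.
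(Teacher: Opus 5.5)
Your proposal is correct and matches the paper, which proves Theorem~\ref{thm:circulant} simply by combining Lemmas~\ref{lemma: 4n}, \ref{lemma: 3n} and~\ref{lemma:odd}, exactly as in your residue count modulo $12$; your routing of non-MDS odd-order circulants through Theorem~\ref{thm:non-mds} soundly repairs the MDS hypothesis that Lemma~\ref{lemma:odd} silently inherits from Theorem~\ref{thm:symmetric}, but your alternative claim that the skew-symmetry argument needs no MDS only yields a nonzero kernel vector (for the circulant $x_1 I_3$ it gives $v=0$), not a nontrivial pair. Since you insist on nontriviality, note also that $n=1$ is a degenerate exception, one the paper's statement shares: in $\mathbb{F}_{2^m}^1$ every related pair has $u=0$, $v=0$ or $u=v$, so the odd case is really settled only for $n\ge 3$.
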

% \medskip
% \textcolor{red}{can we conjecturise the converse of above theorem as well. Example using sage code is needed to verify validity.\\
	% -- I tried with the Sagemath code (that I gave you) for order $n=10$ over $GF(2^8)$-- each circulant MDS matrix has RD, but it was taking long time over $GF(2^{10})$ and $GF(2^{16})$. To me it seems hard to find a counter example as this result may be true even for those orders, but cant say now! So better to leave as conjecture now. If possible you could try to search for restricted domain in the FF to have a counter example.-- Susanta}

\noindent In the next section, we provide the characterization of MDS matrices that are free from related differentials. In particular, we present a generalized analysis of $3\times3$ MDS matrices by examining the existence of vectors 
that constitute related differentials.

\section{Characterization of Related Differentials over $3\times 3$ MDS Matrices}\label{Sec:RD_Charac}

To facilitate this analysis, we first establish Lemmas~\ref{lemma_bound} and \ref{lemma 1}, which exploit the MDS property to characterize all valid pairs of related input differences having minimum Hamming weights. These pairs serve as the fundamental building blocks for deriving the conditions under which the considered mapping admits related differentials. 

We now restate below the following lemma from~\cite{daemen2009new}, which bounds the weight of the input/output difference pairs forming related differentials.

\begin{lemma}\label{lemma_bound}
	If $(u,v)$ and $(u',v')$ are related differentials over a linear map with an associated $n\times n$ matrix that is MDS, then  
	\[ \min\left\{wt(u) + wt(v), wt(u') + wt(v'), wt(u+u') + wt(v+v')\right\}\leq n+\left\lfloor\frac{n}{3}\right\rfloor.\]
\end{lemma}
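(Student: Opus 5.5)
The plan is a simple double counting over the $2n$ coordinates of the concatenated input/output vectors. I do not expect the MDS hypothesis to be needed for the inequality itself; it only makes the bound meaningful, since by the MDS property each nonzero differential has weight at least $n+1$.

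\textbf{Setup.} Write the three differentials as vectors of length $2n$:
\[
a=(u\,|\,v),\qquad b=(u'\,|\,v'),\qquad c=a+b=(u+u'\,|\,v+v').
\]
Since $v=Mu$ and $v'=Mu'$, linearity gives $v+v'=M(u+u')$, so $c$ is again a differential of $M$. Moreover $\mathrm{wt}(a)=\mathrm{wt}(u)+\mathrm{wt}(v)$, and similarly for $b$ and $c$. So the three quantities in the lemma are exactly $\mathrm{wt}(a)$, $\mathrm{wt}(b)$ and $\mathrm{wt}(c)$.

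\textbf{Local count.} Fix one coordinate $k\in\{1,\dots,2n\}$. By the definition of related differentials, both $(u,u')$ and $(Mu,Mu')$ are related differences, so $a_k b_k (a_k\oplus b_k)=0$. Hence at least one of $a_k$, $b_k$, $c_k$ is zero. Since $c_k=a_k+b_k$ in characteristic $2$, if two of the three values vanish then so does the third. Therefore the number of nonzero entries among $a_k,b_k,c_k$ is either $0$ or $2$, and in every case at most $2$.

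\textbf{Summation.} Summing the local count over all $2n$ coordinates gives
\[
\mathrm{wt}(a)+\mathrm{wt}(b)+\mathrm{wt}(c)\le 2\cdot 2n=4n .
\]
The minimum of three numbers is at most their average, so the minimum is at most $4n/3$. The minimum is an integer, so it is at most $\lfloor 4n/3\rfloor=n+\lfloor n/3\rfloor$, which is the claimed bound.

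\textbf{Main obstacle.} The only delicate step is the local observation that exactly $0$ or $2$ of the three entries are nonzero at each coordinate. This uses both the related-difference condition and $c=a+b$ over characteristic $2$. If some of $u$, $u'$, $u+u'$ is zero, the bound holds trivially, so degenerate pairs need no separate treatment.
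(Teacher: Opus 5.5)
Your proof is correct. The paper gives no proof of its own here, since the lemma is quoted from Daemen and Rijmen. Your argument is the standard one: it is Lemma~\ref{lemma 1} applied to the length-$2n$ concatenations $(u\,|\,v)$, $(u'\,|\,v')$, $(u+u'\,|\,v+v')$, which bounds their total weight by $4n$, followed by averaging and $\lfloor 4n/3\rfloor = n+\lfloor n/3\rfloor$. You are also right that the MDS hypothesis plays no role in the inequality itself.
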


This observation implies that, to determine whether a matrix $M$ admits related differentials, it is sufficient to examine all input/output difference pairs whose total (combined) Hamming weight at most 
\(
n + \left\lfloor \frac{n}{3} \right\rfloor.
\) 
Indeed, analyzing pairs up to this weight threshold is enough to detect every possible related differential pair associated with the matrix.

Let us define a vector $t= u \oplus v$ where $u$ and $v$ are related differences. Then observe that any pair of vectors of the triplet $(u,v,t)$ are related differences. So we call $(u,v,t)$ a related differences triplet a ``related triplet'' for short.
$(u,u,0)$ always form a related triplet for any $u \in \mathbb{F}_{2^m}^n$. Since we are trying to determine nontrivial related differences, we omit these triplets.

Now observe that for a linear map $M$, if $(u,v)$ and $(Mu,Mv)$ are related differences, then any two of the three differentials $(u, Mu)$, $(v, Mv)$, $(t, Mt)$ forms related differentials, where $t=u\oplus v$.

In fact, any permutation of related triplets $(u,v,t)$ and $(Mu, Mv, Mt)$ preserves the related-differential property among them. This forms an equivalence class of such triplets. Moreover, to study the related-differential property of a related triplet $(u,v, u\oplus v)$ over a linear map $M$, where $u$ and $v$ are related differences, it is enough to determine conditions for any one pair of this triplet. 

We now restate below the following lemma from~\cite{jha2025construction}.

\begin{lemma}\label{lemma 1}
	If $(u,v,t)$ are related differences triplet in $\mathbb{F}_{2^m}^n$ with $t=u\oplus v$, then
	\[wt(u) + wt(v) + wt(t ) \leq 2n.\]
\end{lemma}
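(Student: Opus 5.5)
The plan is a coordinate-wise counting argument. For a related triplet $(u,v,t)$ with $t=u\oplus v$, I will bound each coordinate's contribution separately and then sum over all $n$ coordinates.

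Fix a coordinate $i\in\{1,\dots,n\}$. Because $u$ and $v$ are related differences, at least one of $u_i$, $v_i$, $u_i\oplus v_i=t_i$ equals zero. Consequently, at most two of the three values $u_i,v_i,t_i$ are nonzero. Writing $\mathbf{1}[\cdot]$ for the indicator, this gives
\[
\mathbf{1}[u_i\neq 0]+\mathbf{1}[v_i\neq 0]+\mathbf{1}[t_i\neq 0]\leq 2 .
\]

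Summing this inequality over $i=1,\dots,n$, the left-hand side becomes $wt(u)+wt(v)+wt(t)$, since each weight is the number of nonzero coordinates. The right-hand side becomes $2n$, which is exactly the claimed bound.

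It is also worth recording the finer per-coordinate picture, since the later characterization will use it. Exactly one of two cases occurs at each coordinate. Either all three values vanish, or exactly two are nonzero and equal, because in characteristic $2$ we have $u_i=v_i$ when $t_i=0$, $t_i=u_i$ when $v_i=0$, and $t_i=v_i$ when $u_i=0$. In particular, $wt(u)+wt(v)+wt(t)$ is always even, and equality $2n$ holds precisely when every coordinate lies in the "two nonzero" case.

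There is no real obstacle here. The only point needing care is checking that the definition of related differences gives "at least one of the three values is zero" coordinatewise; this follows directly from $u_i v_i (u_i\oplus v_i)=0$ in a field, which has no zero divisors.
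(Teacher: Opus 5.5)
Your proof is correct. The paper gives no proof of its own here; it imports the lemma from Jha et al.~\cite{jha2025construction}. Your coordinatewise count is the natural argument: it is the summed form of the remark the paper makes right after defining related differences, namely that for each $i$ at least one of $u_i$, $v_i$, $u_i\oplus v_i$ vanishes.

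Your refinement is worth keeping. At each coordinate either none or exactly two of $u_i,v_i,t_i$ are nonzero, so $wt(u)+wt(v)+wt(t)$ is always even. For $n=3$, the bound $2n=6$ by itself would still allow weight patterns such as $(1,1,1)$, $(1,2,2)$ or $(1,1,3)$. It is this parity that rules them out. That is exactly what justifies the paper's later unexplained claim that, up to permutation, $(1,1,2)$, $(1,2,3)$ and $(2,2,2)$ are the only nontrivial weight patterns for related triplets.
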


The goal is to determine the precise conditions under which a given matrix allows related differentials when two of the input differences of the triplet $(u,v,t)$ are of minimal weight. Accordingly, the analysis can proceed by systematically investigating each of these minimal-weight cases.

% Given a vector $u\in \mathbb{F}_{2^m}^n $, define $supp(u)=\{i|u_i\neq0 ~\forall ~ 1\leq i\leq n\}$.

%\begin{proposition}
%    Let $u, v  \in \mathbb{F}_{2^m}^n $ then 
%    \[\mathrm{wt}(u\oplus v)= \mathrm{wt}(u)+\mathrm{wt}(v)-2\cdot | \mathrm{supp}(u)\cap \mathrm{supp}(v)|. \]
%\end{proposition}

\begin{lemma}\label{lemma 2variable}
	Let $a_iy+b_iz=c_i,~i \in \{1,2,3\}$ be a system of linear equations in two variables over $\mathbb{F}_{2^m}$. Then this system has a consistent solution if and only if 
	\[
	\begin{vmatrix}
		a_1 & b_1 & c_1\\
		a_2 & b_2 & c_2\\
		a_3 & b_3 & c_3\\
	\end{vmatrix} = 0
	.\]
\end{lemma}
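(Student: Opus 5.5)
The plan is to argue through column dependence in $\mathbb{F}_{2^m}^3$, in the spirit of the Rouch\'e--Capelli theorem. Write $\mathbf{a}=[a_1,a_2,a_3]^t$, $\mathbf{b}=[b_1,b_2,b_3]^t$ and $\mathbf{c}=[c_1,c_2,c_3]^t$ for the three columns of the displayed determinant. The system is then equivalent to the single vector equation $y\mathbf{a}+z\mathbf{b}=\mathbf{c}$.

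\emph{Necessity.} Suppose $(y,z)$ is a solution. Then $\mathbf{c}$ is a linear combination of $\mathbf{a}$ and $\mathbf{b}$, so the three columns are linearly dependent. Hence the determinant vanishes; no hypothesis beyond the statement is needed for this direction.

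\emph{Sufficiency.} If the determinant is zero, there is a nontrivial relation $\lambda\mathbf{a}+\mu\mathbf{b}+\nu\mathbf{c}=0$. If $\nu\neq 0$, then $(y,z)=(\lambda/\nu,\mu/\nu)$ is a solution; in characteristic $2$ the signs are immaterial. The remaining case $\nu=0$ means that $\mathbf{a}$ and $\mathbf{b}$ are themselves dependent.

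This degenerate case is the main obstacle, and as literally stated the ``if'' direction is false without an extra hypothesis. For example, take $a_i=b_i=0$ for all $i$ and $\mathbf{c}=[1,0,0]^t$: the determinant is $0$, yet the system is inconsistent.

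I would therefore make explicit the hypothesis under which the lemma is meant to be applied. Namely, the coefficient matrix $[\mathbf{a}\,|\,\mathbf{b}]$ should have rank $2$, that is, some $2\times 2$ minor $a_ib_j-a_jb_i$ is nonzero. In the intended use the coefficients $a_i,b_i$ will be entries of a $3\times 3$ MDS matrix, arising from two of its columns restricted to some rows. Every $2\times 2$ submatrix of an MDS matrix is nonsingular, so this minor is automatically nonzero. Under this hypothesis the case $\nu=0$ would force $\lambda\mathbf{a}+\mu\mathbf{b}=0$ with $(\lambda,\mu)\neq(0,0)$, contradicting rank $2$. So $\nu\neq 0$ and a solution exists.

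I would also record a uniqueness statement, which will be convenient later. Choose a pair of rows $\{i,j\}$ with $a_ib_j+a_jb_i\neq 0$. By Cramer's rule the solution is the unique one
\[
y=\frac{c_ib_j+c_jb_i}{a_ib_j+a_jb_i},\qquad z=\frac{a_ic_j+a_jc_i}{a_ib_j+a_jb_i}.
\]
Expanding the $3\times 3$ determinant along its third row shows directly that the third equation holds exactly when the determinant vanishes. This gives a second, computational proof of both directions, again under the rank-$2$ hypothesis.
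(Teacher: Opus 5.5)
Your proof is correct, and you are right about the statement. The paper gives no proof of this lemma. It states it as a standard linear-algebra fact and then invokes it in Theorem~\ref{thm3*3}, so there is no argument to compare yours against. Your column-dependence (Rouch\'e--Capelli) reasoning is exactly what is needed to justify it.

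Your counterexample $a_i=b_i=0$, $\mathbf{c}=[1,0,0]^t$ is valid. So the ``if'' direction is false as literally stated, and your rank-$2$ hypothesis on $[\mathbf{a}\,|\,\mathbf{b}]$ is a genuine correction rather than a convenience.

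The correction costs the paper nothing. In each of the nine systems of the $(1,2,3)$ case of Theorem~\ref{thm3*3}, the coefficient columns are two full columns of the MDS matrix $M_1$: for instance $[1,a,c]^t$ and $[1,b,d]^t$ when $u=[x,0,0]$, or $[1,1,1]^t$ and $[1,b,d]^t$ when $u=[0,x,0]$. Hence every $2\times 2$ minor of the coefficient matrix is a $2\times 2$ minor of $M_1$, so it is nonzero, and your repaired lemma applies verbatim. Factoring $x\neq 0$ out of the right-hand-side column then recovers the paper's determinants. The $(2,2,2)$ case leads to a homogeneous $3\times 3$ system and does not use this lemma.

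One cosmetic point: in the Cramer's-rule paragraph, expand along the row $k\notin\{i,j\}$ rather than ``the third row''. The identity $a_ky+b_kz+c_k=\det/(a_ib_j+a_jb_i)$ then gives both directions at once under the rank hypothesis.
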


%\begin{lemma}\label{lemma 3variable}
%Let $a_ix+b_iy+c_iz=d_i,~i \in \{1,2,3,4\}$ be a system of linear equations in three variables %over $\mathbb{F}_{2^m}$. Then this system has a consistent solution if and only if 
%\[
%\begin{vmatrix}
%a_1 & b_1 & c_1& d_1\\
%a_2 & b_2 & c_2& d_2\\
%a_3 & b_3 & c_3& d_3\\
%a_3 & b_3 & c_3& d_4\\
%\end{vmatrix} = 0
%\]
%\end{lemma}

%We now find general matrices that do not admit any related differentials. In the following section, we show that any non-MDS matrix contains a related differential.

% Now we use this to determine all the necessary and sufficient conditions that the linear maps of the form $3 \times 3$ matrices will not contain a related differential. 

% \subsection{Characterization of Related Differentials over $3\times3$ MDS Matrices}
\noindent Building on those structural results, we now present a complete algebraic characterization of when $3\times3$ MDS matrices possess related differentials. This characterization provides both necessary and sufficient conditions through a systematic case by case analysis of all possible weight patterns for related triplets. We develop the theory through the representative matrix form~\cite{Kumar_MDS2024}, establish diagonal invariance, derive fifteen algebraic conditions that completely determine the related-differentials property, and provide concrete examples demonstrating both vulnerable and resistant matrices.

In~\cite{Kumar_MDS2024}, the authors introduce a technique for generating all $n \times n$ MDS and involutory MDS matrices over $\mathbb{F}_{2^m}$. The proposed method involves first identifying $n \times n$ representative MDS matrices using a search-based approach. Subsequently, all $n \times n$ MDS and involutory MDS matrices can be obtained by multiplying two diagonal matrices with these representative matrices. To find all $n \times n$ representative MDS matrices, the authors define the representative matrix of the form $M_1$ over $\mathbb{F}_{2^m}^*$ as follows:

\begin{equation}\label{Eqn_The_matrixM1}
	M_1=
	\begin{pmatrix}
		1&1&\ldots&1\\
		1& & &\\
		\vdots & &R&\\
		1&  & &
	\end{pmatrix},
\end{equation}
where $R$ is a $(n-1) \times (n-1)$ matrix.

They also present the unique decomposition of a matrix $M$ over $\mathbb{F}_{2^m}^*$ in the form $M=D_1M_1D_2$, where $D_1$ and $D_2$ are nonsingular diagonal matrices. The result is stated below.

\begin{theorem}\cite{Kumar_MDS2024}\label{Th_decomposition}
	Let $M = (m_{ij})$ be a $n \times n$ matrix over $\mathbb{F}_{2^m}^*$. Then, there exist unique $n \times n$ matrices $D_1$, $D_2$, and $M_1$ over $\mathbb{F}_{2^m}^*$ such that 
	\[M = D_1 M_1 D_2,\] 
	where $D_1$ and $D_2$ are diagonal matrices, the first entry of $D_2$ is $1$, and $M_1$ is a matrix of the form given in (\ref{Eqn_The_matrixM1}).
\end{theorem}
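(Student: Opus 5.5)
We prove Theorem~\ref{Th_decomposition} by normalizing the first row and the first column of $M$ with diagonal scalings, and then showing that the normalization is forced. Write $D_1=\mathrm{diag}(d_1,\dots,d_n)$ and $D_2=\mathrm{diag}(e_1,\dots,e_n)$ with $e_1=1$. Let $M_1=(r_{ij})$ with $r_{1j}=r_{i1}=1$ for all $i,j$. The equation $M=D_1M_1D_2$ is entrywise
\[
m_{ij}=d_i\, r_{ij}\, e_j .
\]

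\textbf{Existence.} We read off the scalings from the first row and the first column.
\begin{itemize}
\item Taking $i=j=1$ gives $d_1=m_{11}$, since $r_{11}=1$ and $e_1=1$.
\item Taking $i=1$ and general $j$ gives $m_{1j}=d_1e_j$, so $e_j=m_{1j}/m_{11}$.
\item Taking $j=1$ and general $i$ gives $m_{i1}=d_i$, so $d_i=m_{i1}$.
\end{itemize}
All entries of $M$ lie in $\mathbb{F}_{2^m}^*$, so these values are nonzero and $D_1,D_2$ are nonsingular. We then define
\[
r_{ij}=\frac{m_{ij}}{d_i e_j}=\frac{m_{ij}\,m_{11}}{m_{i1}\,m_{1j}} .
\]
This is nonzero, and it equals $1$ whenever $i=1$ or $j=1$. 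Hence $M_1$ has the form (\ref{Eqn_The_matrixM1}), with $R$ the lower-right $(n-1)\times(n-1)$ block of $M_1$ and all entries in $\mathbb{F}_{2^m}^*$. By construction $M=D_1M_1D_2$.

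\textbf{Uniqueness.} Suppose $M=D_1M_1D_2=D_1'M_1'D_2'$, where both factorizations satisfy the stated constraints. The computations above used only three facts: the first row and first column of $M_1$ are all $1$, $e_1=1$, and the equation holds entrywise. From these, $d_1$, $e_j$ and $d_i$ were forced to equal $m_{11}$, $m_{1j}/m_{11}$ and $m_{i1}$. So $D_1=D_1'$ and $D_2=D_2'$. Since both diagonal matrices are invertible, $M_1=D_1^{-1}MD_2^{-1}=M_1'$.

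There is no real obstacle in this argument. The one point needing care is the normalization $e_1=1$. Without it there is a one-parameter family of factorizations, $(D_1\lambda,\,M_1,\,\lambda^{-1}D_2)$ for any $\lambda\in\mathbb{F}_{2^m}^*$, and the condition $e_1=1$ is exactly what fixes $\lambda$ and yields uniqueness.
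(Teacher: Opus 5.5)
Your proof is correct and takes essentially the same route the paper relies on. The paper only cites this result from \cite{Kumar_MDS2024}, but Algorithm~\ref{alg:rd-verification} uses exactly your factorization: $D_1=\mathrm{diag}(m_{11},m_{21},\dots,m_{n1})$ and $D_2=\mathrm{diag}(1,m_{11}^{-1}m_{12},\dots,m_{11}^{-1}m_{1n})$, with $M_1=D_1^{-1}MD_2^{-1}$, and your uniqueness argument (the unit first row and column together with $e_1=1$ force every diagonal entry, hence $M_1$) is complete.
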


Multiplying a row of a matrix by a nonzero scalar is an elementary row operation, and the MDS property is preserved under such operations. For a diagonal matrix $D=\mathsf{diag}(\alpha_1,\alpha_2,\ldots,\alpha_n)$, the product $DM_1$ (respectively, $M_1D$) scales the $i$-th row (respectively, the $i$-th column) of $M_1$ by $\alpha_i$. Hence, if $M_1$ is an MDS matrix, then $D_1M_1D_2$ is also MDS for any nonsingular diagonal matrices $D_1$ and $D_2$. The following lemma shows that the related-differential property is also invariant under multiplication by nonsingular diagonal matrices.

% \noindent \textcolor{blue}{In the proof, we need to show a concrete vector pair. have a look this proof. The same for the permutation similarity}

\begin{lemma}\label{Diag_RD}
	% Let $M_1$ be a representative matrix, then related differentials are invariant under $D_1M_1 D_2$ where $D_1$ and $D_2$ are nonsingular diagonal matrices.
	For any nonsingular diagonal matrices $D_1$ and $D_2$, the matrix $M_1$ has a related differential pair if and only if $D_1M_1D_2$ has a related differential pair.
\end{lemma}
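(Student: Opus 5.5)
We prove one direction explicitly and obtain the converse by symmetry. Let $D_1=\mathsf{diag}(\alpha_1,\dots,\alpha_n)$ and $D_2=\mathsf{diag}(\beta_1,\dots,\beta_n)$ with all $\alpha_i,\beta_i\in\mathbb{F}_{2^m}^*$, and put $M=D_1M_1D_2$. The key observation is that the related-difference condition is invariant under coordinatewise scaling by nonzero field elements. If $D=\mathsf{diag}(\gamma_1,\dots,\gamma_n)$ is nonsingular and $x,y\in\mathbb{F}_{2^m}^n$, then for each $i$
\[
(Dx)_i\,(Dy)_i\,\bigl((Dx)_i\oplus(Dy)_i\bigr)=\gamma_i^3\,x_i\,y_i\,(x_i\oplus y_i),
\]
since $\gamma_i x_i\oplus\gamma_i y_i=\gamma_i(x_i\oplus y_i)$ by distributivity. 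Because $\gamma_i\neq 0$, the left side vanishes exactly when the right side does. Hence $(x,y)$ are related differences if and only if $(Dx,Dy)$ are.

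Now suppose $(u,M_1u)$ and $(v,M_1v)$ form a nontrivial related differential pair for $M_1$. We construct an explicit pair for $M$, in the same way as in Proposition~\ref{Prop:permutationRD}, by setting
\[
u'=D_2^{-1}u \qquad\text{and}\qquad v'=D_2^{-1}v.
\]
By the observation applied with $D=D_2^{-1}$, the inputs $(u',v')$ are related differences. On the output side,
\[
Mu'=D_1M_1u \qquad\text{and}\qquad Mv'=D_1M_1v,
\]
so applying the observation with $D=D_1$ to the related differences $(M_1u,M_1v)$ shows that $(Mu',Mv')$ are related differences. Nontriviality is preserved as well. Since $D_2^{-1}$ is a bijection, $u'$ and $v'$ are nonzero and $u'\neq v'$ whenever the same holds for $u,v$; the same applies to $u'\oplus v'=D_2^{-1}(u\oplus v)$. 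Thus $(u',Mu')$ and $(v',Mv')$ form a nontrivial related differential pair for $M$.

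For the converse, write $M_1=D_1^{-1}MD_2^{-1}$. The matrices $D_1^{-1}$ and $D_2^{-1}$ are again nonsingular diagonal matrices, so the argument above applies verbatim with the roles of $M$ and $M_1$ exchanged. No step is a real obstacle. The only point that needs care is the identity $\gamma x\oplus\gamma y=\gamma(x\oplus y)$, which holds because $\oplus$ here is field addition in $\mathbb{F}_{2^m}$.
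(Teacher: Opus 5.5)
Your proof is correct and matches the paper's argument: set $u'=D_2^{-1}u$ and $v'=D_2^{-1}v$, note that $(D_1M_1D_2)u'=D_1M_1u$, transfer the related-difference condition through the nonzero factors $\beta_i^{-3}$ and $\alpha_i^3$, and get the converse from $M_1=D_1^{-1}(D_1M_1D_2)D_2^{-1}$. Your explicit check that nontriviality is preserved under the bijection $D_2^{-1}$ is a small addition that the paper leaves implicit.
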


\begin{proof}
	Let $D_1=\mathsf{diag}(\alpha_1,\alpha_2,\ldots,\alpha_n)$ and $D_2=\mathsf{diag}(\beta_1,\beta_2,\ldots,\beta_n)$ be nonsingular diagonal matrices, and let $u=[u_1,u_2,\dots,u_n]$ and $v=[v_1,v_2,\dots,v_n]$ be related differences such that $(u,M_1u)$ and $(v,M_1v)$
	form a related differential pair for $M_1$. Then
	\[
	u_i \cdot v_i \cdot (u_i \oplus v_i) = 0 \text{ and }  (M_1u)_i \cdot (M_1v)_i \cdot ((M_1u)_i \oplus (M_1v)_i) = 0 
	\quad \forall ~i \in \{1,2,\dots,n\}.
	\]
	
	To show that $D_1M_1D_2$ also has a related differential, we define new vectors $u' = D_2^{-1}u$ and $v' = D_2^{-1}v$. Since $D_2^{-1} = \mathsf{diag}(\beta_1^{-1}, \beta_2^{-1}, \dots, \beta_n^{-1})$, we have $u'_i = \beta_i^{-1}u_i$ and $v'_i = \beta_i^{-1}v_i \quad \forall ~i \in \{1,2,\dots,n\}$. Thus, we have $u'_i \cdot v'_i \cdot (u'_i \oplus v'_i)=0$.
	
	Next, we evaluate the outputs of $D_1M_1D_2$ on $u'$ and $v'$. Notice that 
	\[ (D_1M_1D_2)u' = D_1M_1u \quad \text{and} \quad (D_1M_1D_2)v' = D_1M_1v.\]
	For the $i$-th component, this yields
	\[((D_1M_1D_2)u')_i = \alpha_i(M_1u)_i \quad \text{and} \quad ((D_1M_1D_2)v')_i = \alpha_i(M_1v)_i .\]
	Now $\forall ~i \in \{1,2,\dots,n\}$, we have
	\begin{equation*}
		\begin{aligned}
			&\alpha_i(M_1u)_i \cdot \alpha_i(M_1v)_i \cdot ( \alpha_i(M_1u)_i \oplus \alpha_i(M_1v)_i)\\
			=& \alpha_i^3 \big[ (M_1u)_i \cdot (M_1v)_i \cdot ((M_1u)_i \oplus (M_1v)_i) \big] =0.
		\end{aligned}
	\end{equation*}
	Thus, the vectors $u'$ and $v'$ are related differences for  $D_1M_1D_2$. The converse follows by applying the same argument to $M_1 = D_1^{-1}(D_1M_1D_2)D_2^{-1}$. This completes the proof.
	% Now  $((D_1M_1D_2)u)_i=\alpha_i(M_1u)_i\beta_i \quad \forall ~i \in \{1,2\dots,n\}$. Then, we have 
	% \begin{align*}
		%     &((D_1M_1D_2)u)_i \cdot ((D_1M_1D_2)v)_i \cdot (((D_1M_1D_2)u)_i \oplus ((D_1M_1D_2)v)_i)\\
		% &= \alpha_i^3\beta_i^3(M_1u)_i \cdot (M_1v)_i \cdot ((M_1u)_i \oplus (M_1v)_i) = 0 
		% \quad \forall ~i \in \{1,2\dots,n\}.
		% \end{align*}
	% Thus, 
	% 
	% Similarly, we can show that if $u$ and $v$ are not related differences, then $(D_1M_1D_2)u$ and $(D_1M_1D_2)v$ are also not related differences. 
	% 
	
\end{proof}

% Therefore, since $D_1$ and $D_2$ are nonsingular diagonal matrices, from Lemma \ref{Diag_RD}, we can conclude that instead of checking whether the matrix $M$ has related differential or not, we need to check whether the same holds for $M_1$ or not.
Therefore, since $D_1$ and $D_2$ are nonsingular diagonal matrices, Lemma~\ref{Diag_RD} shows that $M$ has a related differential pair if and only if the representative matrix $M_1$ given in~\eqref{Eqn_The_matrixM1} has a related differential pair. Consequently, it is enough to verify the related-differential property for $M_1$.
\vspace{1em}

\noindent Let $M_1$ be a $3\times 3$ representative MDS matrix over $\mathbb{F}_{2^m}$. Now, to determine a related differential pair of $M_1$, we consider all the cases based on the weights of the vectors. Thus, any related triplet $(u,v,t)$, where $t= u \oplus v$ and $(u,v)$ are related differences, has the following possible cases of weights: 
\[(1,1,2),(1,2,3) \text{ and } (2,2,2).\] 
This covers all the possible cases to determine related differentials. Now, to find related differentials, the triplet $(M_1u, M_1v, M_1t)$ must also be related. Since $M_1$ is an MDS matrix, then $wt(u)+wt(M_1u) \geq 4$ for any $u \in \mathbb{F}_{2^m}^3$. Thus for $(wt(u), wt(v), wt(t))=(1,1,2)$, we have $wt(M_1u)=wt(M_1v)=3$. This further implies that $wt(M_1u)+wt(M_1v)+wt(M_1t) >6$. By Lemma \ref{lemma 1}, this case will not form related differentials. 
The cases when $(wt(u), wt(v), wt(t))=(1,2,3)$ and $(2,2,2)$ are discussed in the Theorem~\ref{thm3*3}.

\begin{remark}
	It is worth mentioning that in~\cite{otal2025cryptographic}, the authors also derive a set of equations for a matrix that will not have any related differential pair if and only if the entries of the matrix do not satisfy these equations. However, this set of equations is incomplete and not valid. For example, for the $3 \times 3$ 
	\[ 
	M_1= \begin{pmatrix}
		1& 1& 1 \\
		1& \alpha& \alpha^2 \\
		1& \alpha^2& \alpha^2+\alpha \\
	\end{pmatrix}\]
	over $\mathbb{F}_{2^4}$, where $\alpha$ is a primitive element and a root of the polynomial $x^4 + x + 1$. The elements of $M_1$ do not satisfy the set of equations provided in~\cite{otal2025cryptographic}. Still, $(u, M_1u)$ and $(v,M_1v)$ form related differentials for $u= [1, \alpha^3 + \alpha^2 + 1, 0]$ and $v= [1, 0, \alpha^3 + \alpha^2 + 1]$. The complete set of equations is not explicitly mentioned in~\cite{otal2025cryptographic}. The explicit forms of equations are derived in this paper for $3 \times 3$ MDS matrices.
\end{remark}

\begin{theorem}\label{thm3*3}
	Let
	\[ M_1= \begin{pmatrix}
		1&1&1\\
		1&a&b\\
		1&c&d\\
	\end{pmatrix}\] be a $3\times 3$ representative MDS matrix over $\mathbb{F}_{2^m}$. Then $M_1$ has no related differential pair if and only if the parameters $a,b,c,d$ do not satisfy any of the $15$ conditions $C_i(a,b,c,d)=0$, for $i=1,\dots,15$, listed in \eqref{Eqn:3MDS_RD}.
	
	\begin{equation}\label{Eqn:3MDS_RD}
		0=\left\{
		\begin{aligned}
			a+b+c+d\\
			a+d\\
			b+c\\
		\end{aligned}
		\right.
		\quad ad+bc =\left\{ 
		\begin{aligned}
			a+b\\
			a+c\\
			b+d\\
			c+d\\ 
		\end{aligned}
		\right.
		\quad ad =\left\{ 
		\begin{aligned}
			a+b+d\\
			a+c+d\\
			b\\
			c\\
		\end{aligned}
		\right.
		\quad bc= \left\{ 
		\begin{aligned}
			a+b+c\\
			b+c+d\\
			a\\
			d\\
		\end{aligned}
		\right.
	\end{equation}
\end{theorem}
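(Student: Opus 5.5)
The plan is a case analysis on the weight patterns of the input and output related triplets. I use the reduction already made before the statement: by Lemma~\ref{Diag_RD} it suffices to treat the representative form $M_1$, and the input pattern $(1,1,2)$ has been excluded. That leaves input triplets $(u,v,t)$ of weights $(1,2,3)$ or $(2,2,2)$ up to permutation. In each case I will list every possible output pattern for $(M_1u,M_1v,M_1t)$ allowed by the MDS bound $wt(x)+wt(M_1x)\ge 4$ and by Lemma~\ref{lemma 1}. Each surviving configuration becomes a linear system in the remaining unknowns, and Lemma~\ref{lemma 2variable} converts its solvability into the vanishing of one $3\times 3$ determinant in $a,b,c,d$. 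The expectation is $9$ conditions from the first pattern and $6$ from the second, giving the fifteen $C_i$.

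\emph{Pattern $(1,2,3)$.} A weight-$1$ vector and a weight-$2$ vector whose sum has weight $3$ must have disjoint supports. So, after scaling, $u=e_i$ and $v$ is supported on $\{j,k\}=\{1,2,3\}\setminus\{i\}$. By MDS, $wt(M_1u)=3$, $wt(M_1v)\ge 2$ and $wt(M_1t)\ge 1$, and Lemma~\ref{lemma 1} caps the total output weight at $6$. Hence $wt(M_1v)=2$ and $wt(M_1t)=1$. Let $r$ be the zero coordinate of $M_1v$. The conditions become $(M_1v)_r=0$ and $(M_1v)_s=(M_1e_i)_s$ for the two indices $s\neq r$. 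Then $M_1t$ is supported only on $r$, where it is automatically nonzero. This is three linear equations in the two unknowns $v_j,v_k$, with right-hand side read from column $i$ of $M_1$. Lemma~\ref{lemma 2variable} gives one determinant condition for each $(i,r)$, so $9$ conditions.

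\emph{Pattern $(2,2,2)$.} Here the supports are the three distinct $2$-subsets, and the entries must agree on shared coordinates. So $u=(x,y,0)$, $v=(x,0,z)$, $t=(0,y,z)$ with $x,y,z\neq 0$, and I normalize $x=1$. Each output has weight at least $2$ and the total is at most $6$, so each output has weight exactly $2$. The three zero positions must be distinct: if two outputs vanished at the same coordinate, their sum would vanish there too, giving weight at most $1$ and contradicting MDS. So the zero positions are given by a permutation $\sigma\in S_3$. For $u$ and $v$ this yields $(M_1u)_{\sigma(1)}=0$ and $(M_1v)_{\sigma(2)}=0$. The condition for $t$ is implied, since $M_1t=M_1u+M_1v$ and coordinate $\sigma(3)$ forces $(M_1u)_{\sigma(3)}=(M_1v)_{\sigma(3)}$. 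After setting $x=1$, this is three equations in $y,z$, and Lemma~\ref{lemma 2variable} again yields one determinant condition per $\sigma$, so $6$ conditions.

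Nondegeneracy is handled by MDS. A solution cannot have $y=0$, $z=0$, or a vanishing $v$, because a nonzero vector of weight at most $2$ cannot have too small an image. So solvability of each system is equivalent to the existence of a genuine related differential. This gives the ``if and only if'' once the case enumeration is shown to be exhaustive.

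The main obstacle is bookkeeping rather than conceptual. I must expand the $15$ determinants with first row and column equal to $1$, simplify them in characteristic $2$, and match them to the forms in \eqref{Eqn:3MDS_RD}. I also need to confirm that no configuration is double counted or missed, especially the choice of which triplet member plays the weight-$1$ role under permutations of the triplet. I would first compute the case $i=1$, $r=1$ to check that it gives $a+b+c+d=0$. I would then use the symmetry of the row and column swap $2\leftrightarrow 3$, which exchanges $a\leftrightarrow d$ and $b\leftrightarrow c$, to pair off the remaining conditions and reduce the computation by about half.
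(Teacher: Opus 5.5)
Your proposal is correct and follows essentially the same route as the paper. It uses the same reduction to input patterns $(1,2,3)$ and $(2,2,2)$, the same forced output patterns $(3,2,1)$ and $(2,2,2)$, and the same $9+6$ determinant conditions: $\det M_1$ with the single entry $(r,i)$ replaced by $0$, respectively with a permutation pattern of entries replaced by $0$. These do expand to exactly the fifteen forms in \eqref{Eqn:3MDS_RD}, and your normalization $x=1$ together with the explicit nondegeneracy and distinct-zero-position checks only make explicit what the paper leaves implicit, including that Lemma~\ref{lemma 2variable} applies because MDS gives the two coefficient columns rank $2$.
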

\begin{proof}
	We know that a possible related triplet $(u,v,t)$ must have weights $(1,2,3)$ or $(2,2,2)$.\\
	Firstly, consider the case $(wt(u), wt(v), wt(t))=(1,2,3)$.
	We have three such possibilities of choosing $u$ ($v$ will be selected accordingly).
	$(u, M_1u)$, and $(v,M_1v)$ form related differentials if and only if $wt(M_1u)=3, wt(M_1v)=2, wt(M_1t)=1$.
	Let us assume that $u=[x,0,0], v=[0,y,z]$. Then we have 
	$$M_1u=[x,x,x], M_1v=[y+z, ay+bz,cy+dz], M_1t=[x+y+z, x+ay+bz,x+cy+dz].$$
	Observe that $wt(M_1u)=3, wt(M_1t)=1$, thus, any two coordinates of $M_1t$ will be zero, and the remaining coordinate will be equal to that of $M_1u$, i.e., $x$ in this case. 
	This will give us the following system of linear equations:
	\[
	M_1t= \begin{pmatrix}
		1&1&1\\
		1&a&b\\
		1&c&d\\
	\end{pmatrix}\begin{pmatrix}
		x\\
		y\\
		z\\
	\end{pmatrix} = \left\{ 
	\begin{aligned}
		[x,0,0]^{t}\\
		[0,x,0]^{t}\\
		[0,0,x]^{t}\\
	\end{aligned}
	\right.
	\]
	Upon using Lemma \ref{lemma 2variable} in each one of the above systems of equations will give us the following conditions:
	\[  \begin{vmatrix}
		0&1&1\\
		1&a&b\\
		1&c&d\\
	\end{vmatrix}=0 \implies a+b+c+d=0,
	\]
	\[ \begin{vmatrix}
		1&1&1\\
		0&a&b\\
		1&c&d\\
	\end{vmatrix}=0 \implies a+b=ad+bc,
	\]
	\[ \begin{vmatrix}
		1&1&1\\
		1&a&b\\
		0&c&d\\
	\end{vmatrix}=0 \implies c+d=ad+bc.
	\]
	If $u=[0,x,0], v=[y,0,z]$ again we get the following system of linear equations:
	\[
	M_1t= \begin{pmatrix}
		1&1&1\\
		1&a&b\\
		1&c&d\\
	\end{pmatrix}\begin{pmatrix}
		y\\
		x\\
		z\\
	\end{pmatrix} = \left\{ 
	\begin{aligned}
		[x,0,0]^t\\
		[0,ax,0]^t\\
		[0,0,cx]^t\\
	\end{aligned}
	\right.
	\]
	Again using Lemma \ref{lemma 2variable} in each one of the above system of equations will give us the following conditions:
	\[ \begin{vmatrix}
		1&0&1\\
		1&a&b\\
		1&c&d\\
	\end{vmatrix}=0 \implies a+c=ad+bc,
	\]
	\[ \begin{vmatrix}
		1&1&1\\
		1&0&b\\
		1&c&d\\
	\end{vmatrix}=0 \implies bc=b+c+d,
	\]
	\[  \begin{vmatrix}
		1&1&1\\
		1&a&b\\
		1&0&d\\
	\end{vmatrix}=0 \implies ad=a+b+d.
	\]
	
	Lastly, for $u=[0,0,x], v=[y,z,0]$, we get the conditions by replacing each element of the third column of $M_1$ by zero and keeping its determinant equal to zero.
	We get the following conditions:
	\[  \begin{vmatrix}
		1&1&0\\
		1&a&b\\
		1&c&d\\
	\end{vmatrix}=0 \implies b+d=ad+bc,
	\]
	\[  \begin{vmatrix}
		1&1&1\\
		1&a&0\\
		1&c&d\\
	\end{vmatrix}=0 \implies ad=a+c+d,
	\]
	\[  \begin{vmatrix}
		1&1&1\\
		1&a&b\\
		1&c&0\\
	\end{vmatrix}=0 \implies bc=a+b+c.
	\]
	This triplet $(wt(u), wt(v), wt(t))=(1,2,3)$ gives us 9 conditions. Now consider the case when $(wt(u), wt(v), wt(t))=(2,2,2)$.
	$(u, M_1u), (v,M_1v)$ form related differentials if and only if $wt(M_1u)=2, wt(M_1v)=2, wt(M_1t)=2$.
	Let us assume that $u=[x,y,0], v=[0,y,z]$. Then we have 
	$$M_1u=[x+y,x+ay,x+cy], M_1v=[y+z, ay+bz,cy+dz], M_1t=[x+z,x+bz,x+dz].$$
	Observe that $wt(M_1u)=wt(M_1v)=wt(M_1t)=2$, thus, any one coordinate of each $M_1u, M_1v, M_1t$ will be zero, so that their positions will not coincide with each other.
	This will give us six systems of linear equations, $(\binom{3}{1}\binom{2}{1}\binom{1}{1})$, which actually can be deduced by keeping one value zero in each column vector of $M_1$ in such a way that each row has exactly one zero element.
	\[ \begin{vmatrix}
		0&1&1\\
		1&0&b\\
		1&c&0\\
	\end{vmatrix}=0 \implies b+c=0,
	\hspace{1cm}
	\begin{vmatrix}
		0&1&1\\
		1&a&0\\
		1&0&d\\
	\end{vmatrix}=0 \implies a+d=0,
	\]
	\[  \begin{vmatrix}
		1&0&1\\
		0&a&b\\
		1&c&0\\
	\end{vmatrix}=0 \implies a+bc=0,
	\hspace{1cm}
	\begin{vmatrix}
		1&1&0\\
		0&a&b\\
		1&0&d\\
	\end{vmatrix}=0 \implies ad+b=0,
	\]
	\[  \begin{vmatrix}
		1&0&1\\
		1&a&0\\
		0&c&d\\
	\end{vmatrix}=0 \implies ad+c=0,
	\hspace{1cm}
	\begin{vmatrix}
		1&1&0\\
		1&0&b\\
		0&c&d\\
	\end{vmatrix}=0 \implies d+bc=0.
	\]
	This completes the proof.
\end{proof}

\noindent The complete characterization yields an efficient algorithm, shown in Algorithm~\ref{alg:rd-verification}, for verifying whether a given $3 \times 3$ MDS matrix $M$ has related differentials.

\begin{algorithm}[htb]
	\caption{Verification of Related Differentials for $3 \times 3$ MDS Matrices}
	\label{alg:rd-verification}
	\begin{algorithmic}[1]
		\Require Finite field $\mathbb{F}_{2^m}$ and a $3\times 3$ MDS matrix $M=(m_{ij})$
		\Ensure Returns whether the matrix avoids or has related differentials
		\State $D_1 \gets \operatorname{Diag}(m_{11}, m_{21}, m_{31})$ and $D_2 \gets \operatorname{Diag}\!\left(1, m_{11}^{-1}m_{12}, m_{11}^{-1}m_{13}\right)$
		\State $M_1 \gets D_1^{-1} M D_2^{-1} =
		\begin{pmatrix}
			1 & 1 & 1 \\
			1 & a & b \\
			1 & c & d
		\end{pmatrix}$
		\For{$i \gets 1$ to $15$}
		\If{Condition $C_i(a,b,c,d)$ in \eqref{Eqn:3MDS_RD} is satisfied}
		\State \Return Has Related Differentials
		\EndIf
		\EndFor
		\State \Return Avoids Related Differentials
	\end{algorithmic}
\end{algorithm}

\begin{example}\label{Example:3MDS_noRD}
	% We construct a $3\times 3$ MDS matrix over the finite field $\mathbb{F}_{2^4}$ that avoids related differentials by ensuring none of the fifteen conditions are satisfied. 
	Consider the matrix
	\[
	M_1=
	\begin{bmatrix}
		1  &  1  &  1\\
		1  &  \alpha    &   \alpha^2\\
		1  &  \alpha^3  &   \alpha^3 + \alpha^2
	\end{bmatrix},
	\]
	where $\alpha$ is a primitive element and root of the constructing polynomial $x^4+x+1$. It can be verified that $M_1$ is an MDS matrix, also it do not satisfy the required equations given in \eqref{Eqn:3MDS_RD}. Therefore, $M_1$ avoids related differentials.
\end{example}

\begin{remark}
	In Theorem~\ref{thm3*3}, we characterize the conditions under which a $3\times 3$ MDS matrix avoids related differentials, and in Example~\ref{Example:3MDS_noRD}, we provide a concrete example of such a matrix. Involutory MDS matrices form another important family studied in the literature, so it is natural to ask whether a $3\times 3$ involutory MDS matrix can also avoid related differentials. However, in~\cite{otal2025cryptographic}, the authors show that $3\times 3$ involutory MDS matrices have related differentials because of their structural properties.
\end{remark}

\noindent We conclude this section with an enumeration results on $3\times 3$ MDS matrices with no related differentials. In~\cite{Kumar_MDS2024}, the authors derive the closed form $(2^m-1)^5(2^m-2)(2^m-3)(2^{2m}-9\cdot 2^m+21)$ for the number of $3\times 3$ MDS matrices over $\mathbb{F}_{2^m}$. Based on the characterization in Theorem~\ref{thm3*3}, we compute the number of $3\times 3$ MDS matrices that have no pair of related differentials. Table~\ref{tab:mds_enumeration} compares this refined count with the total number of $3\times 3$ MDS matrices, showing that excluding related-differential property sharply shrinks the set of admissible matrices.

\begin{table}[ht]
	\centering
	\caption{Enumeration of $3\times 3$ MDS matrices and their subclass that admit no related differential (RD) pair over finite fields.}
	\label{tab:mds_enumeration}
	% Set row stretching for a cleaner look
	% \renewcommand{\arraystretch}{1.2} 
	\resizebox{0.8\textwidth}{!}{%
		\begin{tabular}{c|c|c}
			\toprule
			{Finite Field} & {\# MDS} &  \makecell{{\# MDS with} {no RD pair} }
			\\
			\midrule
			% $\mathbb{F}_{2^3}$ & $7^5 \times 390$ & $7^2\times 24$ & 0 & 0 \\
			% $\mathbb{F}_{2^4}$ & $15^5\times 24206$ & $15^2\times 168$ & $15^5\times 4464$ & \\
			% $\mathbb{F}_{2^5}$ & $31^5\times 658590$ & $31^2\times 840$ & $31^5\times 361440$  &  \\
			% $\mathbb{F}_{2^6}$ & $63^5\times 13392062$ & $63^2\times 3720$ & $63^5 \times 10298160$  &  \\
			% $\mathbb{F}_{2^7}$ & $127^5\times 240234750$ & $127^2\times 15624$ & $127^5\times 212254560$ &  \\
			% $\mathbb{F}_{2^8}$ & $255^5\times 4064764286$ & $255^2\times 64008$ & $255^5\times 3827268144$ &  \\
			$\mathbb{F}_{2^3}$ & $7^5 \times 390$ & 0 \\
			$\mathbb{F}_{2^4}$ & $15^5\times 24206$ & $15^5\times 4464$ \\
			$\mathbb{F}_{2^5}$ & $31^5\times 658590$ & $31^5\times 361440$ \\
			$\mathbb{F}_{2^6}$ & $63^5\times 13392062$ & $63^5\times 10298160$ \\
			$\mathbb{F}_{2^7}$ & $127^5\times 240234750$ & $127^5\times 212254560$ \\
			$\mathbb{F}_{2^8}$ & $255^5\times 4064764286$ & $255^5\times 3827268144$ \\
			\bottomrule
		\end{tabular}
	}
\end{table}

\section{Conclusion}\label{Sec:Conclusion}
In this paper, we study the existence of related differentials in linear layers over finite fields and analyze how this property depends on the underlying matrix structure. We prove that the MDS property is necessary for avoiding related differentials, by showing that every non-MDS matrix admits related differentials. We then investigate two important structured classes of MDS matrices. For symmetric matrices, we prove that all odd-order symmetric MDS matrices possess related differentials. For circulant matrices, we show that circulant MDS matrices of order $n$ admit related differentials whenever $n \not\equiv \pm 2 \pmod{12}$, yielding a broad characterization of related-differential behavior in this family. We also revisit the $3\times 3$ case and derive a complete characterization through necessary and sufficient conditions, enabling exhaustive verification and the construction of MDS matrices resistant to related-differential cryptanalysis.

These results also point to several directions for future research. A natural next step is to extend the characterization obtained for the $3\times 3$ case to $4\times 4$ matrices. Since the present work is primarily theoretical, another important direction is the construction of lightweight MDS matrices without related differentials and with practical efficiency constraints. It would also be interesting to study other structured families beyond circulant and Hadamard matrices and determine whether they exhibit similar limitations or admit new resistant constructions. Finally, in the circulant setting, our results suggest a natural conjecture for the remaining unresolved cases, and proving or disproving this conjecture remains an interesting open problem.

\bibliographystyle{IEEEtran}  % Use IEEE format
\bibliography{ref.bib}

@inproceedings{kc2,
    title={On constructions of involutory {MDS} matrices},
    author={Gupta, Kishan Chand and Ray, Indranil Ghosh},
    booktitle={International Conference on Cryptology in Africa},
    pages={43--60},
    year={2013},
    organization={Springer}
}

@article{kcz,
    title = {Cryptographically significant {MDS} matrices over finite fields: {A} brief survey and some generalized results},
    journal = {Advances in Mathematics of Communications},
    volume = {13},
    number = {4},
    pages = {779-843},
    year = {2019},
    author = {Kishan Chand Gupta and
          Sumit Kumar Pandey and
          Indranil Ghosh Ray and
          Susanta Samanta}
}

@article{sdm,
    author = {Sajadieh, Mahdi and Dakhilalian, Mohammad and Mala, Hamid and Omoomi, Behnaz},
    title = {On Construction of {I}nvolutory {MDS} {M}atrices from {V}andermonde {M}atrices in {$GF(2^q)$}},
    year = {2012},
    issue_date = {September 2012},
    publisher = {Kluwer Academic Publishers},
    address = {USA},
    volume = {64},
    number = {3},
    issn = {0925-1022},
    journal = {Designs, Codes and Cryptography},
    month = {sep},
    pages = {287-308}
}

@article{psa,
    author = {Pehlivano{\~{g}}lu, Meltem Kurt and Sakalli, Muharrem Tolga and Akleylek, Sedat and Duru, Nevcihan and Rijmen, Vincent},
    title = {Generalisation of {H}adamard matrix to generate involutory {MDS} matrices for lightweight cryptography},
    journal = {IET Information Security},
    volume = {12},
    number = {4},
    pages = {348-355},
    year = {2018}
}

@article{gmt,
    title={A new matrix form to generate all $3 \times 3$ involutory {MDS} matrices over $\mathbb{F}_{2^m}$},
    author={G{\"u}zel, G{\"u}ls{\"u}m G{\"o}zde and Sakalli, Muharrem Tolga and Akleylek, Sedat and Rijmen, Vincent and {\c{C}}engellenmi{\c{s}}, Yasemin},
    journal={Information {P}rocessing {L}etters},
    volume={147},
    pages={61--68},
    year={2019},
    publisher={Elsevier}
}

@book{JDA_Thesis_1995,
    author       = {Joan Daemen},
    title        = {Cipher and hash function design, strategies based on linear and differential cryptanalysis, {PhD} {T}hesis},
    publisher    = {K.U.Leuven},
    note          = {\url{http://jda.noekeon.org/}},
    year         = {1995}
}

@InProceedings{LACAN2003,
    author="Lacan, J{\'e}r{\^o}me
    and Fimes, J{\'e}r{\^o}me",
    editor="Mullen, Gary L.
    and Poli, Alain
    and Stichtenoth, Henning",
    title="A {C}onstruction of {M}atrices with {N}o {S}ingular {S}quare {S}ubmatrices",
    booktitle="Finite Fields and Applications",
    year="2004",
    publisher="Springer Berlin Heidelberg",
    address="Berlin, Heidelberg",
    pages="145--147",
    isbn="978-3-540-24633-6"
}

@article{Gupta2023direct,
    title = {{On the Direct Construction of MDS and Near-MDS Matrices}},
    journal = {Advances in Mathematics of Communications},
    volume = {24},
    number = {0},
    pages = {110--135},
    year = {2026},
    issn = {1930-5346},
    doi = {10.3934/amc.2026030},
    url = {https://www.aimsciences.org/article/id/69d772db64170a12e9eaafa3},
    author = {Kishan Chand Gupta and Sumit Kumar Pandey and Susanta Samanta}
}

@book{AES,
    author    = {Joan Daemen and
               Vincent Rijmen},
    title     = {The {D}esign of {R}ijndael: {AES} - {T}he {A}dvanced {E}ncryption {S}tandard},
    series    = {Information Security and Cryptography},
    publisher = {Springer},
    year      = {2002},
    url       = {https://doi.org/10.1007/978-3-662-04722-4},
    doi       = {10.1007/978-3-662-04722-4},
    isbn      = {3-540-42580-2},
    bibsource = {dblp computer science bibliography, https://dblp.org}
}

@book{FJ77,
  author = {MacWilliams, F.J. and Sloane, N.J.A.},
  title = {The {T}heory of {E}rror {C}orrecting {C}odes},
  year = {1977},
  publisher = {North-Holland Publishing Co., Amsterdam-New York-Oxford},
}

@article{Kumar_MDS2024,
    title = {Construction of all {MDS} and involutory {MDS} matrices},
    journal = {Advances in Mathematics of Communications},
    volume = {19},
    number = {3},
    pages = {922--941},
    year = {2025},
    author = {Yogesh Kumar and Prasanna Raghaw Mishra and Susanta Samanta and Kishan Chand Gupta and Atul Gaur}
}

@article{Kumar_4MDS,
  title = {A systematic construction approach for all $4\times 4$ involutory {MDS} matrices},
  author = {Yogesh Kumar and P. R. Mishra and Susanta Samanta and Atul Gaur},
  journal = {Journal of Applied Mathematics and Computing},
  year = {2024},
  volume = {70},
  number = {5},
  pages = {4677--4697},
  isbn = {1865-2085},
  doi = {10.1007/s12190-024-02142-z},
  url = {https://doi.org/10.1007/s12190-024-02142-z}
}

@article{jha2025construction,
  title={Construction of Hadamard-based MixColumns Matrices Resistant to Related-Differential Cryptanalysis},
  author={Jha, Sonu and Li, Shun and Gligoroski, Danilo},
  journal={IACR Communications in Cryptology},
  volume={2},
  number={1},
  year={2025}
}

@article{daemen2009new,
  title={New criteria for linear maps in {AES}-like ciphers},
  author={Daemen, Joan and Rijmen, Vincent},
  journal={Cryptography and Communications},
  volume={1},
  number={1},
  pages={47--69},
  year={2009},
  publisher={Springer}
}

@article{GhaedBardeh_Rijmen_2022,
  title={New Key-Recovery Attack on Reduced-Round AES}, 
  author={Ghaedi Bardeh, Navid and Rijmen, Vincent}, 
  journal={IACR Transactions on Symmetric Cryptology}, 
  url={https://tosc.iacr.org/index.php/ToSC/article/view/9713}, 
  DOI={10.46586/tosc.v2022.i2.43-62},
  volume={2022},
  number={2}, 
  year={2022}, 
  month={Jun.}, 
  pages={43–62} 
}

@inproceedings{Augot2014ShortenedBCH,
  author    = {Daniel Augot and Matthieu Finiasz},
  title     = {{Direct Construction of Recursive MDS Diffusion Layers Using Shortened BCH Codes}},
  booktitle = {Fast Software Encryption -- FSE 2014},
  series    = {Lecture Notes in Computer Science},
  volume    = {8540},
  pages     = {3--17},
  publisher = {Springer},
  year      = {2014},
  address   = {London, UK},
  month     = {March},
  doi       = {10.1007/978-3-662-46706-0_1}
}

@article{GuptaPV17_1,
  author    = {Kishan Chand Gupta and
               Sumit Kumar Pandey and
               Ayineedi Venkateswarlu},
  title     = {Towards a general construction of recursive {MDS} diffusion layers},
  journal   = {Designs, Codes and Cryptography},
  volume    = {82},
  number    = {1-2},
  pages     = {179--195},
  year      = {2017},
  doi       = {10.1007/s10623-016-0261-0}
}

@article{GuptaPV17_2,
  author    = {Kishan Chand Gupta and
               Sumit Kumar Pandey and
               Ayineedi Venkateswarlu},
  title     = {On the direct construction of recursive {MDS} matrices},
  journal   = {Designs, Codes and Cryptography},
  volume    = {82},
  number    = {1-2},
  pages     = {77--94},
  year      = {2017},
  doi       = {10.1007/s10623-016-0233-4}
}

@article{Kesarwani_FSE2019, 
  title={Exhaustive {S}earch for {V}arious {T}ypes of {MDS} {M}atrices}, 
  volume={2019}, 
  url={https://tosc.iacr.org/index.php/ToSC/article/view/8364}, 
  DOI={10.13154/tosc.v2019.i3.231-256}, 
  number={3}, 
  journal={IACR Transactions on Symmetric Cryptology}, 
  author={Kesarwani, Abhishek and Sarkar, Santanu and Venkateswarlu, Ayineedi}, 
  year={2019}, 
  month={Sep.}, 
  pages={231-256} 
}

@inproceedings{Sim2015LightweightInvolution,
  author    = {Siang Meng Sim and Khoongming Khoo and Fr{\'e}d{\'e}rique Oggier and Thomas Peyrin},
  title     = {{Lightweight MDS Involution Matrices}},
  booktitle = {Fast Software Encryption -- FSE 2015},
  editor    = {Gregor Leander},
  series    = {Lecture Notes in Computer Science},
  volume    = {9054},
  pages     = {471--493},
  publisher = {Springer Berlin Heidelberg},
  address   = {Berlin, Heidelberg},
  year      = {2015},
  doi       = {10.1007/978-3-662-48116-5_23}
}

@inproceedings{Liu2016GeneralizedCirculant,
  author    = {Meicheng Liu and Siang Meng Sim},
  title     = {{Lightweight MDS Generalized Circulant Matrices}},
  booktitle = {Fast Software Encryption -- FSE 2016},
  editor    = {Thomas Peyrin},
  series    = {Lecture Notes in Computer Science},
  volume    = {9783},
  pages     = {101--120},
  publisher = {Springer},
  year      = {2016},
  address   = {Bochum, Germany},
  month     = {March},
  doi       = {10.1007/978-3-662-52993-5_6}
}

@inproceedings{Li2016CirculantInvolutory,
  author    = {Yongqiang Li and Mingsheng Wang},
  title     = {On the Construction of Lightweight Circulant Involutory MDS Matrices},
  booktitle = {Fast Software Encryption -- FSE 2016},
  editor    = {Thomas Peyrin},
  series    = {Lecture Notes in Computer Science},
  volume    = {9783},
  pages     = {121--139},
  publisher = {Springer},
  year      = {2016},
  address   = {Bochum, Germany},
  month     = {March},
  doi       = {10.1007/978-3-662-52993-5_7}
}

@article{otal2025cryptographic,
  title={{On the Cryptographic Resilience of MDS Matrices}},
  author={Otal, Kamil and S{\"u}l{\c{c}}e, Ali Mert and Yayla, O{\u{g}}uz},
  journal={Cryptology ePrint Archive},
  year={2025}
}

@InProceedings{SHARK,
  author="Rijmen, Vincent
  and Daemen, Joan
  and Preneel, Bart
  and Bosselaers, Antoon
  and De Win, Erik",
  editor="Gollmann, Dieter",
  title="The cipher {SHARK}",
  booktitle="Fast Software Encryption",
  year="1996",
  publisher="Springer Berlin Heidelberg",
  address="Berlin, Heidelberg",
  pages="99--111",
  isbn="978-3-540-49652-6"
}

@InProceedings{SQUARE,
  author=         "Daemen, Joan
                  and Knudsen, Lars
                  and Rijmen, Vincent",
  editor=         "Biham, Eli",
  title=          "The block cipher {S}quare",
  booktitle=      "Fast Software Encryption",
  year=           "1997",
  publisher=      "Springer Berlin Heidelberg",
  address=        "Berlin, Heidelberg",
  pages=          "149--165"
}

@inproceedings{RBH17,
  author       = {Sondre R{\o}njom and
                  Navid Ghaedi Bardeh and
                  Tor Helleseth},
  editor       = {Tsuyoshi Takagi and
                  Thomas Peyrin},
  title        = {Yoyo {T}ricks with {AES}},
  booktitle    = {Advances in Cryptology - {ASIACRYPT} 2017 - 23rd International Conference on the Theory and Applications of Cryptology and Information Security, Hong Kong, China, December 3-7, 2017, Proceedings, Part {I}},
  series       = {Lecture Notes in Computer Science},
  pages        = {217--243},
  publisher    = {Springer},
  year         = {2017},
  url          = {https://doi.org/10.1007/978-3-319-70694-8\_8},
  doi          = {10.1007/978-3-319-70694-8\_8}
}

@InProceedings{PRIDE,
author="Albrecht, Martin R.
and Driessen, Benedikt
and Kavun, Elif Bilge
and Leander, Gregor
and Paar, Christof
and Yal{\c{c}}{\i}n, Tolga",
editor="Garay, Juan A.
and Gennaro, Rosario",
title="Block {C}iphers -- {F}ocus on the {L}inear {L}ayer (feat. {PRIDE})",
booktitle="Advances in Cryptology -- CRYPTO 2014",
year="2014",
publisher="Springer Berlin Heidelberg",
address="Berlin, Heidelberg",
pages="57--76",
isbn="978-3-662-44371-2"
}

@InProceedings{Fides,
author="Bilgin, Beg{\"u}l
and Bogdanov, Andrey
and Kne{\v{z}}evi{\'{c}}, Miroslav
and Mendel, Florian
and Wang, Qingju",
editor="Bertoni, Guido
and Coron, Jean-S{\'e}bastien",
title="Fides: {L}ightweight {A}uthenticated Cipher with {S}ide-{C}hannel {R}esistance for {C}onstrained {H}ardware",
booktitle="Cryptographic Hardware and Embedded Systems - CHES 2013",
year="2013",
publisher="Springer Berlin Heidelberg",
address="Berlin, Heidelberg",
pages="142--158",
isbn="978-3-642-40349-1"
}

@InProceedings{PRINCE,
  author=       "Borghoff, Julia
                and Canteaut, Anne
                and G{\"u}neysu, Tim
                and Kavun, Elif Bilge
                and Knezevic, Miroslav
                and Knudsen, Lars R.
                and Leander, Gregor
                and Nikov, Ventzislav
                and Paar, Christof
                and Rechberger, Christian
                and Rombouts, Peter
                and Thomsen, S{\o}ren S.
                and Yal{\c{c}}{\i}n, Tolga",
  editor=      "Wang, Xiaoyun and Sako, Kazue",
  title=       "{PRINCE} -- {A} {L}ow-{L}atency {B}lock {C}ipher for {P}ervasive {C}omputing {A}pplications",
  booktitle=    "Advances in Cryptology -- ASIACRYPT 2012",
  year=         "2012",
  publisher=    "Springer Berlin Heidelberg",
  address=      "Berlin, Heidelberg",
  pages=        "208--225"
}

@InProceedings{SKINNY,
  author=         "Beierle, Christof
                  and Jean, J{\'e}r{\'e}my
                  and K{\"o}lbl, Stefan
                  and Leander, Gregor
                  and Moradi, Amir
                  and Peyrin, Thomas
                  and Sasaki, Yu
                  and Sasdrich, Pascal
                  and Sim, Siang Meng",
  editor=         "Robshaw, Matthew and Katz, Jonathan",
  title=          "The {SKINNY} {F}amily of {B}lock {C}iphers and {I}ts {L}ow-{L}atency {V}ariant {MANTIS}",
  booktitle=      "Advances in Cryptology -- CRYPTO 2016",
  year=           "2016",
  publisher=      "Springer Berlin Heidelberg",
  address=        "Berlin, Heidelberg",
  pages=          "123--153"
  }

@InProceedings{MIDORI,
author=         "Banik, Subhadeep
                and Bogdanov, Andrey
                and Isobe, Takanori
                and Shibutani, Kyoji
                and Hiwatari, Harunaga
                and Akishita, Toru
                and Regazzoni, Francesco",
                editor="Iwata, Tetsu
                and Cheon, Jung Hee",
title=          "Midori: {A} {B}lock {C}ipher for {L}ow {E}nergy",
booktitle=      "Advances in Cryptology -- ASIACRYPT 2015",
year=           "2015",
publisher=      "Springer Berlin Heidelberg",
address=        "Berlin, Heidelberg",
pages=          "411--436"
}
\end{document}